\documentclass[final,5p,times]{elsarticle}
\usepackage{amssymb,amsmath,latexsym,amsthm,amsfonts,mathtools}

\usepackage{mathrsfs}
\usepackage{multirow}
\usepackage{graphicx}
\usepackage{textcomp,siunitx}
\usepackage{float}
\usepackage{subcaption}
\usepackage{booktabs}
\usepackage{enumitem}
\usepackage{bm}
\usepackage{hyperref}
\usepackage{comment}
\hypersetup{colorlinks, breaklinks, citecolor=blue, linkcolor=red, urlcolor=blue}
\usepackage[nameinlink,noabbrev,sort&compress]{cleveref}
\crefname{assumption}{Assumption}{Assumptions}
\theoremstyle{plain}
\newtheorem{theorem}{Theorem}

\newtheorem{corollary}{Corollary}

\newtheorem{assumption}{Assumption}

\theoremstyle{remark}

\theoremstyle{definition}
\newtheorem{definition}{Definition}%[section]

\newtheorem{problem}{Problem}
\newcommand{\sign}{\mathrm{sign}}

\usepackage{accents}

\usepackage{lineno}
\modulolinenumbers[5]

\journal{Aerospace Science and Technology}

\biboptions{numbers,sort&compress}

\begin{document}
	\begin{frontmatter}
		\title{Path-Following Guidance for Unmanned Aerial Vehicle with Bounded Lateral Acceleration}
		\author[srk]{Vinay Kathiriya}
		\ead{23b0021@iitb.ac.in}  
		\author[srk]{Saurabh Kumar}
		\ead{saurabh.k@aero.iitb.ac.in}
		\author[srk]{Shashi Ranjan Kumar}
		\ead{srk@aero.iitb.ac.in}
		\address[srk]{Intelligent Systems and Control Lab, Department of Aerospace Engineering, Indian Institute of Technology Bombay, Powai, Mumbai- 400076, India}
		\begin{abstract}
			This paper addresses the three-dimensional path-following guidance problem for unmanned aerial vehicles under explicit actuator constraints. Unlike conventional approaches that assume unbounded control inputs or handle saturation heuristically, the proposed method incorporates bounded lateral acceleration directly into the guidance design. A nonlinear guidance framework is developed employing a nested saturation-based control technique. The proposed guidance strategy guarantees bounded control inputs while ensuring exponential convergence of cross-track errors to zero. The formulation is applicable to general smooth paths and is systematically extended from planar to three-dimensional scenarios using a path-tangent coordinate framework. Rigorous stability analysis based on Lyapunov theory establishes convergence and feasibility properties of the closed-loop system. Numerical simulations on representative paths, including straight-line, circular, and sinusoidal paths, demonstrate that the proposed method achieves superior tracking performance, reduced control effort, and robustness against disturbances compared to existing guidance laws. The simplicity of the design and its compatibility with practical actuator limits make it suitable for real-world UAV applications.
		\end{abstract}
		
		\begin{keyword}
			Motion control, autonomous vehicle, path-following, nonlinear guidance, unmanned aerial systems, bounded control. 
		\end{keyword}
	\end{frontmatter}
	
	\section{Introduction}\label{sec:intro}
	The use of unmanned aerial vehicles (UAVs) has expanded rapidly in both military and civilian domains. In defense applications, UAVs are employed for surveillance, reconnaissance, target interception, and logistics support. Civilian applications include mapping, surveying, infrastructure inspection, search-and-rescue operations, precision agriculture, traffic monitoring, and medical supply delivery to remote regions. With such diverse and safety-critical applications, it is essential to develop robust guidance systems capable of performing autonomous tasks such as path following, target tracking, obstacle avoidance, and trajectory tracking. To achieve reliable performance, UAVs should exhibit high accuracy, agility, and resilience against complex coupled dynamics, while simultaneously respecting practical constraints such as actuator saturation, strict path requirements, and external disturbances. In particular, actuator limitations, such as control surface saturation, may destabilize the system and degrade tracking accuracy. Motivated by these challenges, this work focuses on path-following under bounded-input constraints.  
	
	Early UAV guidance strategies were adapted from those used in interceptor and marine systems. The pure pursuit (PP) approach, initially developed for target interception, was later applied to UAVs, where the vehicle follows a moving point on the desired path, which was used to generate a smooth trajectory \cite{yamasaki2009robust, coulter1992implementation, chen2019trajectory}. Later, this idea was further extended by introducing the concept of a virtual target point that moves along the reference path, known as the carrot-chasing algorithm \cite{jin2020path,sujit2013evaluation}, where the UAV continually pursues a virtual target placed at a fixed look-ahead distance ahead of the UAV's closest point on the path. The authors in \cite{rysdyk2003uav} employed a geometric approach, utilizing a line-of-sight (LOS) guidance law that steers the UAV's velocity vector toward an aim point on the path. Unlike the classical LOS method, which used heading commands, the methods in \cite{park2007performance} used lateral acceleration commands to steer the UAV to a virtual target on a look-ahead circle of radius $L_1$. However, the performance deteriorates when the UAV is operated outside the look-ahead circle. The $L_2$ guidance law \cite{curry2013l+} addressed this limitation by adaptively adjusting the look-ahead distance to reduce the sensitivity to ground speed and large cross-track error. A hybrid method, pure pursuit and line-of-sight (PLOS), that combines PP and LOS principles was introduced in \cite{jin2020path} and achieves better performance than the individual methods.
	
	Another approach, known as the vector field (VF) \cite{nelson2007vector, sun2022stability}, has also been utilized to design path-following guidance strategies. This method generates ground-track vector fields whose gradients ensure asymptotic convergence of cross-track error and heading angle error, while maintaining robustness in constant-wind conditions. However, VF methods do not explicitly handle input constraints, leading to infeasible commands during aggressive maneuvers. To improve the performance of the UAV, various nonlinear guidance strategies based on different techniques, such as backstepping \cite{mohd2015enhanced,asl2017adaptive}, sliding mode control \cite{dagci2003path,kumar2024generic,kumar2024robust}, and feedback linearization \cite{voos2009nonlinear,mokhtari2006feedback}, have been developed. The authors in \cite{mohd2015enhanced, lungu2020auto} designed a controller using the backstepping method, where the outer loop (heading rates) guidance command is developed first, followed by the inner loop design (attitude rates). In \cite{dagci2003path,kumar2024generic,wang2022integrated,mofid2024robust}, a sliding surface corresponding to the desired system response was defined, and the system states were driven onto this surface, which leads to the path following behavior. In \cite{jia2017integral,labbadi2019robust}, both backstepping and sliding mode techniques are combined to stabilize the quadrotor attitude and accomplish trajectory tracking under external disturbances. In \cite{voos2009nonlinear,mokhtari2006feedback,yang2022robust}, guidance laws were designed using feedback linearization, where the UAV’s nonlinear dynamics were linearized through an appropriate state transformation and nonlinear feedback. While these approaches improved performance, they typically neglected the actuator saturation effects. 
	
	While research on developing path-following guidance strategies has been extensively explored in the literature, their development under bounded control constraints has received relatively little attention. For instance, a PP-based approach that guaranteed fixed-time convergence was discussed in \cite{doi:10.2514/6.2026-1947,doi:10.2514/1.G008792}. A nested-saturation-based bounded control method is proposed in \cite{teel1992global,johnson2003nested}, and was later applied to design control strategies for quadrotors in \cite{patrikar2019nested}, interceptor guidance in \cite{kathiriya2025impact}, and for point-mass UAVs in \cite{kendoul2006nonlinear}.  A bounded-input path-following guidance law for fixed-wing UAVs under wind disturbances is presented in \cite{beard2014fixed}.
	% Recent studies have demonstrated that nonlinear model predictive control effectively handles actuator constraints during complex maneuvers \cite{ru2017nonlinear}. In addition, \cite{carlos2020efficient} utilizes real-time iteration schemes to maintain bounded inputs even in the presence of communication time delays.
	
	Inspired by the aforementioned works, this work's main contributions are summarized as follows: We propose a bounded-input 3D path-following guidance law for UAVs that ensures convergence to a predefined desired path while adhering to control input constraints. The proposed strategy guarantees an exponential convergence of the UAV to the desired path. Unlike existing methods, which are primarily developed for the straight-line or circular paths, the proposed approach remains applicable to sufficiently smooth paths. A comparative analysis with several existing path-following methods, designed using various control techniques, shows that the proposed strategy requires lower control effort and achieves enhanced tracking performance. Furthermore, the design is simple and can be implemented using inexpensive onboard sensors, making it suitable for real-world applications.   
	
	The rest of the paper is structured as follows. \Cref{sec:problem} describes the UAV kinematics and problem statement. The proposed bounded-input path-following guidance law will be derived using nested saturation theory in \Cref{sec:main}, followed by demonstrating the guidance law's performance through simulation results in \Cref{sec:simulations}. \Cref{sec:conclusion} concludes the work while indicating some future directions of research.  
	
	\section{Problem Formulation}\label{sec:problem}
	Consider a UAV required to follow a three-dimensional predefined path $P' \in \mathbb{R}^3$, which is assumed to be a generic (smooth) curve, as shown in \Cref{fig:3D_general_problem}. The UAV is modeled as a point-mass, nonholonomic vehicle, where the horizontal and vertical accelerations $a_h$ and $a_v$ are used to guide the vehicle along the desired path, while maintaining a constant speed. The kinematics of the UAV are governed by
	\begin{subequations}\label{eq:3D_dynamics}
		\allowdisplaybreaks
		\begin{align}
			\dot{x}&=v\cos\gamma \cos\chi, \label{eq:dot_x}\\
			\dot{y}&=v\cos\gamma \sin\chi,\label{eq:dot_y}\\
			\dot{z}&=v\sin\gamma, \label{eq:dot_z}\\
			\dot{\chi}&=\dfrac{a_h}{v\cos\gamma}, \label{eq:dot_chi}\\
			\dot{\gamma}& =\dfrac{a_v}{v} \label{eq:dot_gamma},
		\end{align} 
	\end{subequations}
	where $[x,\,y,\,z]^\top \in \mathbb{R}^3$ denotes the instantaneous position of the UAV in three-dimensional space, $v \in \mathbb{R}_{+}$ is the constant airspeed with respect to the inertial frame, and $\chi$ and $\gamma$ represent the heading angle and the flight-path angle of the UAV, respectively. Note that the expressions in \eqref{eq:dot_x}-\eqref{eq:dot_z} describe the translational motion of the UAV in the inertial frame of reference, whereas \eqref{eq:dot_chi} and \eqref{eq:dot_gamma} denote the turn rates of the UAV into the mutually orthogonal planes.
	\begin{figure}[!ht]
		\centering
		\includegraphics[width=0.8\linewidth]{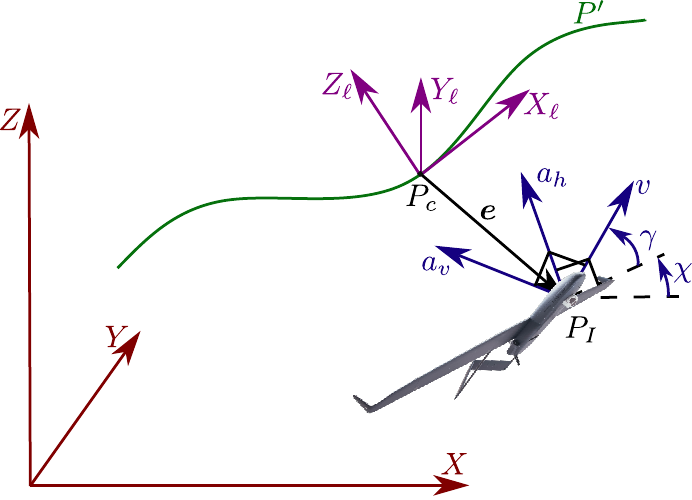}
		\caption{Illustration of 3D path following problem.}
		\label{fig:3D_general_problem}
	\end{figure} 
	
	It is essential to note that most three-dimensional path-following guidance strategies decompose the vehicle motion into two planes in the inertial frame, namely the horizontal ($X$-$Y$) and vertical ($X$-$Z$) planes. While such decompositions may be adequate for simple trajectories, such as straight-line paths, they become ineffective for spatially curved or inclined paths, as they fail to preserve the true geometry of the reference trajectory. To overcome this limitation, we construct a local path-tangent coordinate frame that moves along the desired path. This local moving frame captures the curvature and orientation of the path in three-dimensional space, thereby enabling an accurate computation of cross-track errors and associated control inputs.
	
	Let $P' \in \mathbb{R}^3$ denote the predefined smooth reference path, and let the UAV’s instantaneous position be denoted by $P_I \in \mathbb{R}^3$. The closest point on the path to the UAV is denoted by $P_c$. At this point, a local coordinate frame ${X_\ell Y_\ell Z_\ell}$ is defined, analogous to the Frenet--Serret frame. Here, $X_\ell$ is aligned with the tangent vector to the path, $Y_\ell$ is the normal (radial) vector, and $Z_\ell$ is the binormal vector, defined as $X_\ell \times Y_\ell$. This construction yields a coordinate system that adapts to the instantaneous geometry and orientation of the reference path, providing a convenient basis for defining path-following errors and guidance laws in three-dimensional space.
	
	The cross-track error vector is defined as $\bm{e} = P_I - P_c$. Since $\bm{e}$ is in the inertial frame, it must be decomposed into meaningful components by projecting it onto the local coordinate frame. Projecting $\bm{e}$ onto the each of the axis of the local frame gives the deviation in that direction: the horizontal cross-track error $d_h =\bm{e}^\top \bm{J}_\ell$ measures the lateral displacement of the UAV from the desired path in the horizontal plane, and the vertical cross-track error $d_v =\bm{e}^\top \bm{K}_\ell$ measures the displacement in the vertical plane. Here $\bm{J}_\ell$ and $\bm{K}_\ell$ denote the unit vector along $Y_\ell$ and $Z_\ell$ - axes, respectively. To ensure an accurate path following, the guidance law must generate appropriate control inputs to drive both tracking errors $d_h$ and $d_v$ to zero while satisfying the bounded control constraints $|a_h| \leq a_h^{\max}$ and $|a_v| \leq a_v^{\max}$. For control design purposes, the tracking errors $d_h$ and $d_v$, along with their time derivatives $\dot{d}_h$ and $\dot{d}_v$, are modeled in the local frame. This transformation effectively decouples the guidance problem into two independent subsystems: one defined in the horizontal plane and the other in the vertical plane. As a result, the original three-dimensional path-following problem is reduced to two simpler two-dimensional problems. Convergence of the UAV in each plane guarantees convergence to the desired three-dimensional path. 
	% This geometric framework enables accurate path following while respecting input constraints.
	
	In practical UAV applications, control inputs are inherently bounded by the physical limitations of UAVs and their onboard actuators. However, most existing guidance strategies do not explicitly account for these constraints during the design process. As a consequence, such approaches may suffer from degraded performance or even lead to instability when implemented on actual vehicles. To address this issue, the present work explicitly incorporates the bounded control inputs by imposing constraints on the commanded lateral accelerations. We are now in a position to formally state the main problem addressed in this paper.
	
	\begin{problem}
		For the given engagement geometry between the UAV and the desired path, design the commanded horizontal acceleration $a_h$ and vertical acceleration $a_v$ such that the following objectives are satisfied:
		\begin{itemize}
			\item $d_h \to 0, \dot{d}_h \to 0$ as $t \to \infty$,
			\item $d_v \to 0, \dot{d}_v \to 0$ as $t \to \infty$,
			\item control inputs must satisfy $a_h \in \mathbb{U}_h$ and $a_v \in \mathbb{U}_v$ $\forall$ $t \geq 0$, where $\mathbb{U}_h \coloneqq \{a_h \in \mathbb{R}:|a_h| \leq a_h^{\max}\}$ and $\mathbb{U}_v \coloneqq \{a_v \in \mathbb{R}:|a_v| \leq a_v^{\max}\}$, and $a_h^{\max}$ and $a_v^{\max}$ are strictly positive constants.
		\end{itemize}
	\end{problem}
	
	In other words, we aim to design a nonlinear path-following guidance strategy that ensures the UAV follows its desired path while never violating the control input constraints. Note that the guidance strategy should be derived within a nonlinear engagement framework to ensure its applicability to a wider range of operating conditions. In the following section, we derive the guidance command to achieve the said objectives.
	
	\section{Derivation of Guidance Strategy}\label{sec:main}
	In this section, we first develop a bounded-input path-following guidance law for planar paths and subsequently extend the formulation to the three-dimensional case.
	
	Consider a planar engagement scenario, as shown in \Cref{fig:1}, in which the UAV maintains level flight, that is, the flight-path angle and its rate satisfy $\gamma=\dot{\gamma}=0$. Under this assumption, the UAV kinematics given by \eqref{eq:3D_dynamics} reduces to  
	\begin{align}
		\dot{x} = v \cos\chi,\;
		\dot{y} = v \sin\chi,\;
		\dot{\chi} = \dfrac{a_h}{v}. \label{eq:2D Dynamics}
	\end{align}
	Since the vehicle maintains a level flight, $\dot{z}=0$. For the simplicity of the notation, we denote $\chi$ as $\psi$ and $a_h$ as $a_m$, and thus the UAV's dynamics given in \eqref{eq:2D Dynamics} becomes
	\begin{align}
		\dot{x} = v \cos\psi,\,  
		\dot{y} = v \sin\psi,\     
		\dot{\psi} = \dfrac{a_m}{v}, 
	\end{align}
	where we refer to $\psi$ as the heading angle and $a_m$ as the lateral acceleration that serves as the control input. The lateral acceleration is assumed to be bounded as
	\begin{align}
		|a_m(t)| \le a_{\max}, \, \forall t \ge 0,
	\end{align}
	where $a_{\max} = a_{h}^{\max}>0$ is a known maximum allowable lateral acceleration.  
	\begin{figure}[!ht]
		\centering
		\includegraphics[width=0.7\linewidth]{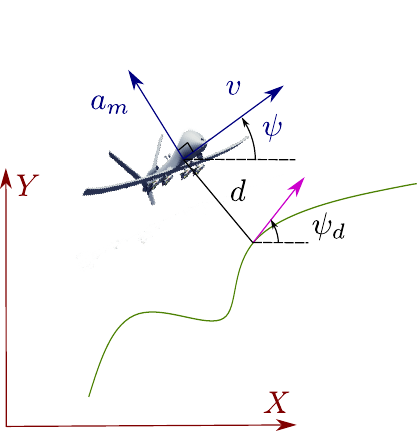}
		\caption{Illustration of planar path following problem.}
		\label{fig:1}
	\end{figure}
	
	Let the term $d$ denote the perpendicular distance between the UAV and a predefined smooth planar path, and $\psi_d$ denote the desired heading angle (see \Cref{fig:1}).  We now derive the UAV's guidance law to ensure it follows the desired planar path, subject to bounded control inputs. To follow any desired planar path, it is essential to first steer the UAV onto it. In our context, this can be achieved by nullifying the perpendicular separation between the UAV and the desired path. Therefore, we consider the perpendicular distance, $d$, as a metric, whose nullification can lead to the desired path-following behavior. To that end, we first carefully analyze the behavior of error $d$. One may obtain the rate of change of $d$ by decomposing the UAV's velocity vector along it as
	\begin{align}
		\dot{d} = v \sin(\psi-\psi_d). \label{eq:d_dot}
	\end{align}
	The dynamics of the distance $d$, given in \cite{kumar2023robust}, can be obtained by differentiating \eqref{eq:d_dot} with respect to time as 
	\begin{align}
		\ddot{d} = a_m \cos(\psi - \psi_d) - v \dot{\psi}_d \cos(\psi -\psi_d). \label{eq:d_ddot}
	\end{align}
	It follows from \eqref{eq:d_ddot} that the dynamics of $d$ has a relative degree of two with respect to the UAV's steering control input, $a_m$.
	\begin{assumption}\label{remark:dpsi_d}
		The path curvature, denoted as $\dot{\psi}_{d}$, is known and satisfies the following inequality
		\begin{align}
			v|\dot{\psi}_d|\leq v\dot{\psi}_d^{\mathrm{max}} \leq a_{\mathrm{max}},
		\end{align}
		where $\dot{\psi}_d^{\mathrm{max}}$ denotes the maximum curvature of desired path.
	\end{assumption}
	Note that this assumption ensures the feasibility of the path that need to be followed by the UAV. Next, define the error states
	\begin{align}
		d_1 \coloneqq d, \, d_2 \coloneqq \dot d.
	\end{align}
	Then, the error dynamics can be written as
	\begin{subequations}\label{eq:d_states}
		\begin{align}
			\dot d_1 &= d_2=v\sin(\psi-\psi_d), \label{eq:dy1} \\
			\dot d_2 &= a_m \cos(\psi - \psi_d) - v \dot{\psi}_d \cos(\psi - \psi_d)\label{eq:dy2}.
		\end{align}
	\end{subequations} 
	By choosing lateral acceleration command in \eqref{eq:d_states} as
	\begin{align}
		a_m = \frac{u + v \dot{\psi}_d \cos(\psi - \psi_d)}{\cos(\psi - \psi_d)},
		\label{eq:am_u}
	\end{align}
	where $u$ is an auxiliary control input, the expression in \eqref{eq:d_states} becomes
	\begin{align}
		\dot d_1 = d_2, \; \dot d_2 = u.
		\label{eq:dynamics}
	\end{align}
	Note that \eqref{eq:dynamics} is in the form of a double integrator system, which can be written in state-space form as
	\begin{align}
		\dot{\bm d} = \bm A_d \bm d + \bm B_d u,
	\end{align}
	where  $\bm{d}$, $\bm{A}_d$, and $\bm{B}_d$ are defined as 
	\begin{align}
		\bm d = \begin{bmatrix} d_1 \\ d_2 \end{bmatrix}, \;
		\bm A_d =
		\begin{bmatrix}
			0 & 1\\
			0 & 0
		\end{bmatrix}, \;
		\bm B_d =
		\begin{bmatrix}
			0\\
			1
		\end{bmatrix}.
	\end{align}
	
	Next, we introduce a coordinate transformation that enables us to place the closed-loop poles at any arbitrary location on the negative real axis. Towards that, let there be two sets $\mathcal{K}_{1} \coloneqq \{ k_1\}$ and  $\mathcal{K}_{2} \coloneqq \{k_1,k_2\}$ such that $\mathcal{K}_{1} \subset \mathcal{K}_{2}$, and define a function $\mathcal{F}^m_{i}(\mathcal{K}_\ell)$ that acts over the sets $\mathcal{K}_\ell$ for $\ell = 1,2$. The function $\mathcal{F}^m_{i}(\cdot)$ is used to generate the product of combinations of elements, taken $m$ at a time from the set $\mathcal{K}_{\ell}$. In essence, the function $\mathcal{F}^m_{i}(\mathcal{K}_\ell)$ is a generating function that gives the output as $i\textsuperscript{th}$ combinations of products of $m$ elements taken from the sets $\mathcal{K}_{\ell}$, without repetition and disregarding order with $\mathcal{F}^0_{i}=1$. We use a transformation to place the closed-loop poles at an arbitrary desired location on the negative real axis. We now introduce the function $\mathcal{C}(l,m)$, defined over the set 
	$\mathcal{K}_{l}$ for integers $l \geq 0$ and $0 \leq m \leq l$. This function is defined as  
	\begin{align}
		\mathcal{C}(l,m) &= \sum_{k=1}^{\mathcal{\bar{C}}^l_m} \mathcal{F}^{m}_{k}(\mathcal{K}_{l}),\; \mathcal{C}(l,0) = 1,
	\end{align}
	where $\mathcal{\bar{C}}^l_m$ is the binomial coefficient of $\left(^l_m\right)$. By using above construction, we define new coordinate system $\bm{h} \coloneqq [h_1, h_2]^\top$ as a transformation from $\bm{d} \coloneqq [d_1,d_2]^\top$, as
	\begin{align}
		h_{2-i}=k_{i+1} \sum_{j=0}^i \mathcal{C}(i,j) d_{2-j},\;\forall\;i=0,1. \label{eq:h}
	\end{align}
	Setting $i=1$ and $0$ in \eqref{eq:h}, we get $h_1$ and $h_2$ as
	\begin{subequations} \label{eq:h1h2}
		\begin{align}
			h_1&=k_1k_2d_1+k_2d_2, \label{eq:h1}\\
			h_2&=k_1d_2. \label{eq:h2}
		\end{align}
	\end{subequations}
	The expression in \eqref{eq:h1h2} can be represented in a compact form, with $\bm{T}_{\rm hd} \in \mathbb{R}^{2 \times 2}$ as a transformation matrix, as 
	\begin{align}
		\bm{h}=\bm{T}_{\rm hd}\bm{d},\bm{T}_{\rm hd}=
		\begin{bmatrix}
			k_1k_2 & k_2\\
			0 & k_1
		\end{bmatrix}.
	\end{align}
	Since $\bm{T}_{\rm hd}$ is an upper triangular matrix with nonzero diagonal entries, it is always invertible, with its inverse given by 
	\begin{equation}
		\bm{T}^{-1}_{\rm hd}= \frac{1}{k_1^2k_2}
		\begin{bmatrix}
			k_1 & -k_2\\
			0 & k_1 k_2
		\end{bmatrix}.\label{eq:t_inverse}
	\end{equation}
	To obtain the dynamics of variable $\bm{h}$, we differentiate  $\bm{h}=\bm{T}_{\mathrm{hd}} \bm{d}$ with respect to time, and use the relations, $\dot{\bm{d}}=\bm{A}_{\mathrm{d}} \bm{d}+\bm{B}_{\mathrm{d}}u$ and $\bm{d} = T^{-1}_{\rm hd} \bm{h} $, to obtain
	\begin{align}
		\dot{\bm{h}}&=\bm{T}_{\rm hd}\dot{\bm{d}}=\bm{T}_{\rm hd}\bm{A}_{\mathrm{d}}d+T_{\rm hd}\bm{B}_{\mathrm{d}} u ,\nonumber  \\
		&=\bm{T}_{\rm hd}\bm{A}_{\mathrm{d}} \bm{T}^{-1}_{\rm hd}\bm{h}+\bm{T}_{\rm hd}\bm{B}_{\mathrm{d}}u=\bm{A}_{\rm h}\bm{h}+\bm{B}_{\rm h}u ,  \label{eq:h_dynamics_1}
	\end{align}
	where $\bm{A}_{\rm h} \in \mathbb{R}^{2 \times 2}$ and $\bm{B}_{\rm h} \in \mathbb{R}^{2 \times 1}$ are defined as 
	\begin{align*}
		\bm{A}_{\mathrm{h}}&=
		\begin{bmatrix}
			k_1k_2 & k_2\\
			0 & k_1  
		\end{bmatrix}
		\begin{bmatrix}
			0&1\\
			0&0
		\end{bmatrix} \frac{1}{k_1^2 k_2}
		\begin{bmatrix}
			k_1&-k_2\\
			0&k_1k_2
		\end{bmatrix}
		=
		\begin{bmatrix}
			0&k_2\\
			0&0
		\end{bmatrix},\\
		\bm{B}_{\mathrm{h}}&=
		\begin{bmatrix}
			k_1k_2 & k_2\\
			0 & k_1  
		\end{bmatrix}
		\begin{bmatrix}
			0\\
			1
		\end{bmatrix}
		=
		\begin{bmatrix}
			k_2\\
			k_1
		\end{bmatrix}.
	\end{align*}
	By expanding \eqref{eq:h_dynamics_1}, we get the dynamics as
	\begin{align}
		\dot{h}_{1}=k_2h_2+k_2u,\;
		\dot{h}_2=k_1u .\label{eq:new_dynamics}
	\end{align}
	
	% \textcolor{blue}{The above transformation was first introduced in \cite{johnson2003nested} to place the closed-loop poles of \eqref{eq:new_dynamics} on the negative real axis. To bound the control input, we employ a nested saturation structure discussed in \cite{teel1992global}.}
	Before deriving the control input $u$, we introduce a definition that facilitates the control design process.
	\begin{definition} (\cite{teel1992global})
		Let there be a positive constants $M \in \mathbb{R}_+$ and a function $\sigma(\cdot):\mathbb{R}\rightarrow \mathbb{R}$. The function $\sigma_i(\cdot)$ is said to be linear saturation if it is continuous, non-decreasing, and satisfies the following conditions
		\begin{subequations}
			\begin{align}
				s\sigma_i(s)&>0; \,\forall s\neq 0, \label{con:1}\\  
				\sigma_i(s) &= \begin{cases}
					s & |s|\leq M_i,\\
					M_i\,\sign(s)&\text{otherwise}.
				\end{cases}\label{con:2}
			\end{align} 
		\end{subequations}
	\end{definition}
	Using the dynamics \eqref{eq:dynamics} and a linear saturation function on $\sigma_i(\cdot)$ defined as above, we now proceed to design the guidance law.
	\begin{theorem}\label{thm:u}
		Consider the system dynamics given in \eqref{eq:new_dynamics}.  Let the gains $M_1$ and $M_2$ are
		\begin{align}
			M_1 < \dfrac{M_2}{2},~~M_2 \coloneqq 
			\left|\left(
			M_2' 
			-  v |\dot{\psi}_d| \right)\cos(\psi-\psi_d)
			\right|,
			\label{eq:M2_def}
		\end{align}
		where $M_2'>a_{\mathrm{max}}$ is a known bound on the lateral acceleration. For any linear saturation functions $\sigma_1(.)$ and $\sigma_2(.)$
		with bounds $M_1$ and $M_2$, respectively, the constrained-input guidance law
		\begin{align}
			u = -\sigma_2\!\left(h_2+\sigma_1(h_1)\right),
			\label{eq:input}
		\end{align}
		ensures an exponential convergence of the error, $d$, and its derivative, $\dot d$, to zero. Consequently, the UAV follows the desired path.
	\end{theorem}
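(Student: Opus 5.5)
The argument follows the classical Teel nested-saturation proof applied to \eqref{eq:new_dynamics}, i.e. $\dot h_1=k_2(h_2+u)$, $\dot h_2=k_1u$ with $u=-\sigma_2(h_2+\sigma_1(h_1))$. We work outside-in: first the outer state $h_2$, then the inner state $h_1$. Throughout we assume $k_1,k_2>0$ and $M_2$ bounded away from zero, which requires $|\psi-\psi_d|$ to stay away from $\pi/2$. This is the same condition under which \eqref{eq:am_u} is well defined.

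\textbf{Step 1 (outer state reaches the linear region).} Take $V_2=\tfrac12 h_2^2$, so that $\dot V_2=-k_1h_2\,\sigma_2(h_2+\sigma_1(h_1))$. Since $|\sigma_1|\le M_1<M_2/2$, whenever $|h_2|>M_1$ the sign of $h_2+\sigma_1(h_1)$ equals the sign of $h_2$. Then $\dot V_2<0$, and in fact $\dot V_2\le -k_1|h_2|\min\{|h_2|-M_1,M_2\}$. Hence $h_2$ enters the set $\{|h_2|\le M_1+\epsilon\}$ for a small $\epsilon$ with $2M_1+\epsilon\le M_2$. It does so in finite time and remains there. On that set $|h_2+\sigma_1(h_1)|\le 2M_1+\epsilon\le M_2$, so $\sigma_2$ acts linearly and $u=-h_2-\sigma_1(h_1)$.

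\textbf{Step 2 (inner state).} After this finite time, $\dot h_1=k_2(h_2+u)=-k_2\sigma_1(h_1)$. With $V_1=\tfrac12h_1^2$ we get $\dot V_1=-k_2h_1\sigma_1(h_1)<0$ for $h_1\ne0$, so $h_1$ enters $\{|h_1|\le M_1\}$ in finite time. Once there, $\dot h_1=-k_2h_1$, so $h_1$ decays exponentially. On the same region $\dot h_2=-k_1(h_2+h_1)$, which is a stable linear cascade with poles at $-k_1$ and $-k_2$; this is the pole placement intended by \eqref{eq:h}. Therefore $\bm h\to0$ exponentially. Since $\bm d=\bm T_{\rm hd}^{-1}\bm h$ by \eqref{eq:t_inverse}, $d$ and $\dot d$ also converge exponentially. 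Finite-time entry into the linear region followed by exponential decay gives exponential convergence globally on the feasible set, with a transient-dependent constant.

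\textbf{Step 3 (bounded control).} From \eqref{eq:am_u}, $|a_m|\le |u|/|\cos(\psi-\psi_d)|+v|\dot\psi_d|\le M_2/|\cos(\psi-\psi_d)|+v|\dot\psi_d|$. By the definition of $M_2$ in \eqref{eq:M2_def}, this bound is $M_2'$, using $M_2'>v|\dot\psi_d|$, which follows from \cref{remark:dpsi_d}.

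\textbf{Main obstacle.} The delicate point is that $M_2$ depends on $\psi-\psi_d$ and is therefore time-varying. Step 1 must then be redone with $M_1$ chosen below $\tfrac12\inf_t M_2(t)$. That requires showing $|\psi-\psi_d|$ stays bounded away from $\pi/2$ along trajectories. For this I would use $d_2=v\sin(\psi-\psi_d)$ together with the monotone decrease of $|h_2|=k_1|d_2|$ from Step 1. If initially $|\sin(\psi-\psi_d)|<1$, this decrease keeps $|\cos(\psi-\psi_d)|\ge\cos(\psi_0-\psi_{d,0})>0$, which yields a uniform positive lower bound on $M_2$.
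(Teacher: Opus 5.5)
Your Steps~1 and~2 are the paper's proof. Both use $V_2=h_2^2/2$ to push $h_2$ into a band where $\sigma_2$ is linear; your band with $\epsilon=M_2/2-M_1$ is exactly the paper's $\mathcal{D}_2$. Both then use $\dot h_1=-k_2\sigma_1(h_1)$ with $V_1=h_1^2/2$, the linear cascade with eigenvalues $-k_1,-k_2$, and $\bm d=\bm T_{\rm hd}^{-1}\bm h$. Your Step~3 is the paper's Corollary. The paper treats $M_2$ as a constant and never deals with its dependence on $\psi-\psi_d$, so your last paragraph goes beyond the paper. However, the argument you sketch there fails at one step.

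Step~1 does not make $|h_2|$ monotone. It guarantees $\dot V_2<0$ only for $|h_2|>M_1$, and inside that band $|h_2|$ can grow. With $\zeta=\psi-\psi_d$, take $\zeta_0=0$ (so $h_2(0)=0$) and $h_1(0)>M_1$. Then $\dot h_2(0)=-k_1\sigma_2(M_1)=-k_1M_1\neq0$, so your claim $|\cos\zeta(t)|\ge|\cos\zeta_0|=1$ fails immediately. What does hold, for every value of $M_2(t)$ (since $|\sigma_1|\le M_1$ and $\sigma_2$ preserves sign), is $|h_2(t)|\le\max\{|h_2(0)|,M_1\}$, i.e.\ $|\sin\zeta(t)|\le\max\{|\sin\zeta_0|,M_1/(k_1v)\}$.

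This keeps $\zeta$ away from $\pm\pi/2$ only under the additional hypothesis $M_1<k_1v$. In words, the saturated closing rate $M_1/k_1$ must be below the speed; this is the analogue of $H_2/s_1<v_m$ in $\mathcal{C}_4$. The condition $M_1<M_2/2$ does not imply it: $k_1=0.2$, $v=10$, $M_2=10$ admits $M_1=4>k_1v$. Without it the lower bound itself fails. While $h_1$ stays saturated (large $|d(0)|$), the outer loop drives $h_2$ toward $-\sign(h_1)M_1$. Since $h_2=k_1v\sin\zeta$, this value can be approached only as $\zeta\to-\sign(h_1)\pi/2$, where $M_2=0$.

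Here is how to repair it. Assume $M_1<k_1v$ and $\cos\zeta_0\neq0$, and set $c_0=\min\{|\cos\zeta_0|,\sqrt{1-M_1^2/(k_1v)^2}\}$. Then \Cref{remark:dpsi_d} gives $M_2(t)\ge(M_2'-a_{\max})c_0$ for all $t$. You then need the constant $M_1$ to satisfy $M_1<\tfrac12(M_2'-a_{\max})c_0$. This is achievable, because the right-hand side tends to $\tfrac12(M_2'-a_{\max})|\cos\zeta_0|>0$ as $M_1\to0$. With that fixed $M_1$, rather than the $M_2/\eta$ the paper uses later, your Steps~1--2 go through with $M_2$ replaced by this lower bound.
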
 
	
	\begin{proof}
		By using the coordinate transformation $\bm{h} = \bm{T}_{\mathrm{hd}} \bm{d}$, the expression in \eqref{eq:dynamics} can be transformed into \eqref{eq:new_dynamics}. With the guidance law given in \eqref{eq:input}, the closed-loop dynamics become
		\begin{subequations}
			\begin{align}
				\dot{h}_1&=k_2k_1d_2-k_2\sigma_2(h_2+\sigma_1(h_1)), \label{eq:y1_d}\\
				\dot{h}_2&=-k_1\sigma_2(h_2+\sigma_1(h_1)).  \label{eq:y2_d}
			\end{align} \label{eq:y1y2_d}
		\end{subequations}
		We first analyze the dynamics $h_{2}$. Consider the Lyapunov function candidate $\mathcal{V}_2 =\frac{h_2^2}{2}$, whose time derivative along the state trajectories can be obtained as  
		\begin{align}
			\dot{\mathcal{V}}_2 = h_2\dot{h}_2 = \textcolor{blue}{-} h_2k_1\sigma_2(h_2+\sigma_1(h_1)) .
		\end{align}
		Noticing that $k_1$, $k_2 >0$, and using \eqref{con:1} and \eqref{con:2} implies that $h_2$ has the same sign as $\sigma_2(.)$ only if $h_2$ has the same sign as $h_2+\sigma_1(h_{1})$. Upon applying equation \eqref{con:2} to $\sigma_1(h_{1})$ and selecting $M_1 < \frac{M_2}{2}$, it immediately follows that $\dot{\mathcal{V}}_{2}<0\,\forall\,h_2 \notin  \mathcal{D}_2\coloneqq \{h_2:|h_2| \leq M_2/2\}$. Thus, if the trajectory of $h_2$ starts outside of the region $\mathcal{D}_2$, it enters the region $\mathcal{D}_2$ within a finite time and remains in $\mathcal{D}_2$ for all future times. Note that the right-hand side of \eqref{eq:y1y2_d} is globally Lipschitz, and that the derivatives are bounded, implying that the state $h_1$ remains bounded for any finite time.
		
		Once $h_2$ enters the region $\mathcal{D}_2$, \eqref{con:2} implies $\sigma_2(.)$ operates in its linear region since the arguments of $\sigma_2$ is upper bounded as
		\begin{align}
			&|h_2+\sigma_1(h_1)| \leq \frac{M_2}{2} +M_1 <M_2. \label{eq:s21}
		\end{align}
		From the definition of $M_2$ and \Cref{remark:dpsi_d}, we have
		\begin{align}
			M_2 =\lvert (M_2'- v|\dot{\psi}_d|)\cos(\psi-\psi_d) \rvert \leq M_2'. \label{eq:s22}
		\end{align}
		Thus, from \eqref{eq:s21} and \eqref{eq:s22}, it implies that $\sigma_2(.)$ is bounded by $M_2'$. Now, the dynamics of $h_1$ can be obtained as
		\begin{align}
			\dot{h}_1&=k_2k_1d_2-k_2(h_2+\sigma_1(h_1)), \nonumber\\
			&= k_2k_1d_2-k_2k_1d_2-k_2\sigma_1(h_1))=-k_2\sigma_1(h_1) .
		\end{align}
		Next, we define another Lyapunov function candidate as $\mathcal{V}_1=({h_1^2}/{2})$, whose time derivative can be obtained as
		\begin{align}
			\dot{\mathcal{V}}_1=h_1\dot{h}_1=-k_2h_1\sigma_1(h_1) .
		\end{align}
		Since $k_2>0$, using \eqref{con:1}, we have $\dot{\mathcal{V}}_1<0$ for all $h_1\notin \mathcal{D}_1=\{h_1:|h_1|\leq M_1\}$. As a result, $h_1$ enters into region $\mathcal{D}_1$ within a finite time, and subsequently it remains confined to $\mathcal{D}_1$ for all future time. Thus, both the states $h_2$ and $h_1$ enter into the sets $\mathcal{D}_2$ and $\mathcal{D}_1$ within a finite time, and the saturation functions $\sigma_2(.)$ and $\sigma_1(.)$ are operating in a linear region. This implies that after a certain finite time, the closed-loop dynamics become
		\begin{subequations}
			\begin{align}
				\dot{h}_1&=-k_2h_1  ,\\
				\dot{h}_2&=-k_1(h_1+h_2).
			\end{align}    \label{eq:nnnew}
		\end{subequations}
		To show the exponential stability, we express the closed-loop dynamics \eqref{eq:nnnew} in a compact matrix form as
		\begin{align}
			\dot{\bm{h}} = \bm{Q} \bm{h},
		\end{align}
		where the matrix $\bm{Q} \in \mathbb{R}^{2 \times 2}$ is given by
		\begin{align}
			\bm{Q}=
			\begin{bmatrix}
				-k_2 & 0 \\
				-k_1 & -k_1
			\end{bmatrix}.
		\end{align}
		The eigenvalues of the matrix $\bm{Q}$ are computed by solving the characteristic equation $ \det(\Lambda  \bm{I}_2 - \bm{Q}) =0$, where $\bm{I}_2\in\mathbb{R}^{2\times 2}$ represents an identity matrix. The matrix $(\Lambda \bm{I}_2 - \bm{Q})$ can be evaluated as  
		\begin{align*}
			\Lambda \bm{I}_{2}-\bm{Q}=
			\begin{bmatrix}
				\Lambda & 0\\
				0 & \Lambda  
			\end{bmatrix}
			-
			\begin{bmatrix}
				-k_2 & 0\\
				-k_1 & -k_1  
			\end{bmatrix} 
			=
			\begin{bmatrix}
				\Lambda+k_2 & 0\\
				k_1 & \Lambda+k_1  
			\end{bmatrix},
		\end{align*}
		Setting the determinant $\det(\Lambda  \bm{I}_2 - \bm{Q}) =0$ yields
		\begin{align}
			\mathrm{det}|\Lambda \bm{I}_{2}-\bm{Q}|&=(\Lambda+k_1)(\Lambda+k_2)=0. \label{eq:eigen_vales}
		\end{align}
		On solving \eqref{eq:eigen_vales}, we get $\Lambda=-k_1,-k_2$. Since the constants $k_1$, $k_2>0$, all the eigenvalues of the matrix $\bm{Q}$ are strictly negative real. Once both $\sigma_1(.)$ and $\sigma_2(.)$ simultaneously operate in the linear region, $h_1$ and $h_2$ are exponentially stable and converge to zero. 
		
		To demonstrate that $d_1$ and $d_2$ are also exponentially stable, we use the transformation $\bm{h} = \bm{T}_{\mathrm{hd}}\bm{d}$. Since $\bm{T}_{\mathrm{hd}}$ is a square matrix whose determinant is given by $\det(\bm{T}_{\mathrm{hd}}) = k_1^2k_2 \neq 0$, implying its inverse will always exist, given by \eqref{eq:t_inverse}. Therefore, $\bm{d} = \bm{T}_{\mathrm{hd}}^{-1}\bm{h}$, is also exponentially stable. As a matter of fact, $d_1$ and $d_2$ exponentially converge to zero. This completes the proof.
	\end{proof} 
	
	Next, we calculate the closed-loop pole location when $\sigma_2(.)$ is unsaturated and $\sigma_1(.)$ is saturated, or both are unsaturated. We start with the former case when $\sigma_1(.)$ is saturated and $\sigma_2(.)$ is unsaturated.
	By substituting the guidance command from \eqref{eq:input} into \eqref{eq:dynamics}, one may obtain
	\begin{align}
		\dot{d}_2=u=-\sigma_2(h_2+\sigma(h_1)). \label{eq:dotd2_1}
	\end{align}
	For the case when $\sigma_1(.)$ is saturated and $\sigma_2(.)$ is unsaturated, \eqref{eq:dotd2_1} becomes
	\begin{align*}
		\dot{d}_2=-h_2\pm M_1 \implies  \dot{d}_2+h_2\mp M_1=0,
	\end{align*}
	which represents a forced linear system with a constant forcing function $M_1$. The corresponding homogeneous part is given by  
	\begin{align}
		\dot{d}_{2}+h_2=0 . \label{eq:dotd2_2}
	\end{align}
	By substituting equation \eqref{eq:h2} and \eqref{eq:dynamics} into \eqref{eq:dotd2_2}, one may obtain
	\begin{align}
		\ddot{d}_{1}+k_1\dot{d}_{1}=0. \label{eq:char}
	\end{align}
	The characteristic equation for \eqref{eq:char} is 
	\begin{align}
		\gamma(\Lambda)=\Lambda^2+k_1\Lambda=\Lambda(\Lambda+k_1),
	\end{align}
	By equating the characteristic equation with zero, one may obtain closed-loop poles at \{$-k_1,0$\}.
	
	Next, we calculate the location of closed-loop poles for the case when both $\sigma_1(.)$ and $\sigma_2(.)$ are unsaturated. For such a scenario, \eqref{eq:dotd2_1} becomes
	\begin{align}
		\dot{d}_2=-h_2-h_1.  \label{eq:h1_h2}
	\end{align}
	By substituting \eqref{eq:h1} and \eqref{eq:h2} into \eqref{eq:h1_h2}, one may obtain
	\begin{align}
		\dot{d}_2+k_1d_2+k_1k_2d_1+k_1d_2&=0 , \nonumber\\    
		\dot{d}_2+(k_1+k_2)d_2+k_1k_2d_1&=0, \label{dd}
	\end{align} 
	which on further substituting \eqref{eq:dy1} and \eqref{eq:dy2} into \eqref{dd}, lead us to arrive at
	\begin{align}
		\ddot{d}_1+(k_1+k_2)\dot{d}_1+k_1k_2d_1=0, \label{eq:char2}
	\end{align}
	By following a similar procedure to the previous case, one can obtain the location of closed-loop poles at \{$-k_1,-k_2$\}. 
	
	Note that \Cref{thm:u} establishes that the transformed control input $u$ remains bounded by $M_{2}$. Next, we endeavor to prove that $a_m$ remains bounded in the following corollary.
	\begin{corollary}
		If $u$ remains bounded by $M_{2}$, then the UAV's steering control $a_m$ remains bounded by $\pm M_2'$ for all time $t>0$.
	\end{corollary}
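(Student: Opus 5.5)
We plan to work directly from the relation \eqref{eq:am_u} between the physical steering command and the auxiliary input,
\begin{align*}
a_m = \frac{u}{\cos(\psi-\psi_d)} + v\dot{\psi}_d ,
\end{align*}
so that, by the triangle inequality,
\begin{align*}
|a_m| \le \frac{|u|}{|\cos(\psi-\psi_d)|} + v|\dot{\psi}_d| .
\end{align*}
The task then reduces to bounding the first term. We take as given what \Cref{thm:u} provides: $u=-\sigma_2(\cdot)$ with $\sigma_2$ a linear saturation of bound $M_2$, hence $|u|\le M_2$ for all $t$.

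Next we substitute the definition \eqref{eq:M2_def}. Writing $M_2 = |M_2'-v|\dot{\psi}_d||\,|\cos(\psi-\psi_d)|$, we get
\begin{align*}
\frac{|u|}{|\cos(\psi-\psi_d)|} \le \bigl|M_2' - v|\dot{\psi}_d|\bigr| .
\end{align*}
The cosine factor cancels, which is exactly why $M_2$ was built with it. By \Cref{remark:dpsi_d} and $M_2'>a_{\max}$ we have $v|\dot{\psi}_d|\le a_{\max}<M_2'$, so the absolute value can be dropped: $|M_2'-v|\dot{\psi}_d|| = M_2'-v|\dot{\psi}_d|$. Adding back $v|\dot{\psi}_d|$ then gives $|a_m|\le M_2'$, that is, $a_m\in[-M_2',M_2']$ for all $t>0$.

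The main obstacle is the division by $\cos(\psi-\psi_d)$, which is singular when $|\psi-\psi_d|=\pi/2$. The plan is to argue that this is harmless in the bound itself. Since $M_2$ is time-varying and proportional to $|\cos(\psi-\psi_d)|$, we have $|u|\le M_2$, so $u\to0$ at least as fast as the cosine and the ratio stays bounded by $M_2'-v|\dot{\psi}_d|$. We would state this as a limit or with the convention that $u/\cos(\psi-\psi_d)$ is interpreted through the bound. If a cleaner argument is required, we would additionally note that $d_2=v\sin(\psi-\psi_d)$ stays bounded by $|d_2|<v$ along trajectories, which follows from the boundedness of the $h$-states established in \Cref{thm:u}. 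This keeps the heading error away from $\pm\pi/2$ after the transient.
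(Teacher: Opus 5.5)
Your proposal is correct and follows essentially the same route as the paper: take absolute values in \eqref{eq:am_u}, apply the triangle inequality, use $|u|\le M_2$ with $M_2=|M_2'-v|\dot{\psi}_d||\,|\cos(\psi-\psi_d)|$ so that the cosine cancels, and invoke \Cref{remark:dpsi_d} together with $M_2'>a_{\max}$ to drop the absolute value and obtain $|a_m|\le M_2'$ (simplifying first to $a_m=u/\cos(\psi-\psi_d)+v\dot{\psi}_d$ is only a cosmetic reordering). Your closing remark about $|d_2|<v$ is unnecessary, and as you yourself note it could at best hold after the transient, so it cannot by itself handle the $\pm\pi/2$ singularity for all $t>0$; the paper simply divides by $|\cos(\psi-\psi_d)|$ in this proof and treats that singularity separately afterwards, by arguing that $\psi-\psi_d=\pm\pi/2$ is not an attractor.
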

	\begin{proof}
		Taking absolute value on both sides of \eqref{eq:am_u} yields
		\begin{align}
			|a_m|=\left|  \frac{u + v \dot{\psi}_d \cos(\psi - \psi_d)}{\cos(\psi - \psi_d)}  \right|. \label{eq:mode_a_m}
		\end{align}
		Using the triangle equality in the numerator of \eqref{eq:mode_a_m}, we obtain 
		\begin{align}
			|a_m| \leq  \frac{\left|u\right| + \left|v \dot{\psi}_d \cos(\psi - \psi_d)\right|}{\left|\cos(\psi - \psi_d)\right|}. \label{eq:mode_2}
		\end{align}
		Using the fact that the maximum value of $|u| \leq M_2$, the expression in \eqref{eq:mode_2} becomes
		\begin{align}
			|a_m| \leq  \frac{M_2 + |v \dot{\psi}_d \cos(\psi - \psi_d)|}{|\cos(\psi - \psi_d)|}. \label{eq:mode_3} 
		\end{align}
		Recall from \Cref{thm:u} that $ M_2=|(M_2' -  v |\dot{\psi}_d| )\cos(\psi-\psi_d)|$. Substituting the value of $M_2$ into \eqref{eq:mode_3} and separating terms using triangle equality, one may obtain
		\begin{align}
			|a_m| &\leq  \frac{\left|(M_2' -  v |\dot{\psi}_d| )\cos(\psi-\psi_d)
				\right| + \left|v \dot{\psi}_d \cos(\psi - \psi_d)\right|}{|\cos(\psi - \psi_d)|}, \nonumber\\
			&\leq \frac{(|M_2' -  v |\dot{\psi}_d||)|\cos(\psi-\psi_d)| + v (| \dot{\psi}_d |) | \cos(\psi - \psi_d)|}{|\cos(\psi - \psi_d)|}, \nonumber\\
			&\leq |M_2' -  v |\dot{\psi}_d| | + v | \dot{\psi}_d | . 
		\end{align}
		From the definition of $M_2$ and \Cref{remark:dpsi_d}, we have
		\begin{align}
			|a_m| &\leq 
			M_2' 
			-  v |\dot{\psi}_d|  + v | \dot{\psi}_d | =M_2'. \label{eq:am_bound}
		\end{align}
		It follows from \eqref{eq:am_bound} that $|a_m| \leq M_2'$ holds for all $t > 0$, under the proposed control law \eqref{eq:input}.  
	\end{proof}
	
	It can be observed from \eqref{eq:am_u} that lateral acceleration becomes singular when $(\psi - \psi_d) = \pm \pi/2$. Towards that, we now show that $(\psi - \psi_d) = \pi/2$ is not an attractor and therefore it will not cause any singularity during implementations. Let $ \zeta \coloneqq \psi-\psi_d$. To prove that $\zeta = \pm(\pi/2)$ is not an attractor, we have to show that at $\zeta = \pm (\pi/2)$, $\dot{\zeta} \neq 0$. As a matter of fact, even if $\zeta = \pm (\pi/2)$ momentarily, it will not remain there. 
	
	Towards that, we analyze the dynamics of $\zeta$, given by
	\begin{align}
		\dot{\zeta}&=\dot{\psi}-\dot{\psi}_d=\frac{a_m}{v}-\dot{\psi}_d=\frac{u + v \dot{\psi}_d \cos\zeta}{v \cos\zeta}-\dot{\psi_d}, \nonumber\\
		&=\frac{u}{v \cos\zeta}= -\frac{\sigma_2\!\left(k_1d_2+\sigma_1(k_1k_2d_1+k_2d_2)\right)}{v \cos\zeta}. \label{eq:dzeta_0}
	\end{align}
	Note that the behavior of $\dot{\zeta}$ at $\zeta = \pm (\pi/2)$ depends on saturation functions $\sigma_2(.)$ and $\sigma_1(.)$. We analyze three distinct cases that collectively cover all possible operating modes: (i) when $\sigma_2(.)$ is saturated (outer loop saturation) (ii) when $\sigma_2(.)$ is unsaturated and $\sigma_1(.)$ is saturated (inner loop saturation) (iii) when both $\sigma_2(.)$ and $\sigma_1(.)$ are unsaturated (linear operation). 
	
	\textit{Case 1:} When the outer saturation function $\sigma_2(.)$ is saturated.
	Once $\sigma_2(.)$ is saturated, we have $\sigma_2(.)=\pm M_2$, where the sign depends on the direction of saturation. Substituting this into \eqref{eq:dzeta_0} yields 
	\begin{align}
		\dot{\zeta}= \frac{\mp M_2}{v \cos\zeta}=\frac{\mp \left|\left(
			M_2' 
			-  v |\dot{\psi}_d| \right)\cos\zeta
			\right|}{v \cos\zeta} = \frac{\mp \left(|
			M_2' 
			-  v |\dot{\psi}_d| |\right) |\cos\zeta
			|}{v \cos\zeta}. \label{eq:dzeta_1}
	\end{align}
	We analyze the behavior when $\zeta \rightarrow \dfrac{\pi}{2}$ as
	\begin{align}
		\lim_{\zeta \to \pi/2} \dot{\zeta} = \lim_{\zeta \to \pi/2} \frac{\mp \left(|
			M_2' 
			-  v |\dot{\psi}_d| | \right) |\cos\zeta
			|}{v \cos\zeta}. \label{eq:dzeta_2}
	\end{align}
	The evaluation of this limit depends on the behavior of the sign of $\cos\zeta$ in the neighborhood of $\pi/2$. We take two cases when $\zeta$ approaches from left $\left(\zeta \rightarrow\dfrac{\pi^-}{2}\right)$ and from right $\left(\zeta\rightarrow\dfrac{\pi^+}{2}\right)$.
	
	\textit{Subcase 1a:} When $\zeta \to \dfrac{\pi}{2}^{-}$. In this region, $|\cos\zeta| = \cos\zeta$, and thus \eqref{eq:dzeta_2} becomes
	\begin{align}
		\lim_{\zeta \to \pi/2^-} \dot{\zeta} = \lim_{\zeta \to \pi/2^-} \frac{\mp \left|
			M_2' 
			-  v |\dot{\psi}_d| \right|\cos\zeta
		}{v \cos\zeta}. \label{eq:dzeta_3}
	\end{align}
	At $\zeta= \pi/2$, both numerator and denominator approach zero, yielding an indeterminate form $\dfrac{0}{0}$. Applying L'Hôpital's rule by differentiating the numerator and denominator of \eqref{eq:dzeta_3}with respect to $\zeta$ yields 
	\begin{align}
		\lim_{\zeta \to \pi/2^-} \dot{\zeta} &= \lim_{\zeta \to \pi/2^-} \frac{\mp\left(|M_2' -  v|\dot{\psi}_d|| \right) (-\sin\zeta)}{v(-\sin\zeta)} \nonumber \\
		&= \frac{\mp|M_2' -  v|\dot{\psi}_d||}{v}>0, \label{eq:dzeta_3_result}
	\end{align}
	At $\zeta=\pi/2$, $\dot{\zeta}$ has a finite and non-zero value.
	
	\textit{Subcase 1b:} When $\zeta \to \dfrac{\pi}{2}^+$. In this region, $|\cos\zeta| = -\cos\zeta$, so \eqref{eq:dzeta_2} becomes
	\begin{align}
		\lim_{\zeta \to \pi/2^+} \dot{\zeta} = \lim_{\zeta \to \pi/2^+} \frac{\pm|M_2' -  v|\dot{\psi}_d|| \cdot \cos\zeta}{v\cos\zeta}. \label{eq:dzeta_4}
	\end{align}
	Again applying \text{L'H\^opital's} rule
	\begin{align}
		\lim_{\zeta \to \pi/2^+} \dot{\zeta} &= \lim_{\zeta \to \pi/2^+} \frac{\mp\left(|M_2' -  v|\dot{\psi}_d|| \right)  (-\sin\zeta)}{v(-\sin\zeta)} \nonumber \\
		&= \frac{\pm|M_2' -  v|\dot{\psi}_d||}{v}>0, \label{eq:dzeta_4_result}
	\end{align}
	which is also finite and non-zero.
	
	\textit{Case 2:} When $\sigma_2(.)$ is unsaturated and $\sigma_1(.)$ is saturated.  In this case, the outer saturation function operates in the linear region while the inner saturation function is saturated $\sigma_1(.)=\pm M_1$. Then,  \eqref{eq:dzeta_0} reduces to  
	\begin{align}
		\dot{\zeta}&= -\frac{k_1d_2 \pm M_1}{v \cos\zeta}. \label{eq:dzeta_5}
	\end{align}
	From \eqref{thm:u}, we can write $M_1 =M_2/\eta$, where $\eta>2$. Substituting the expression for $d_2=v\sin\zeta$ from \eqref{eq:dy1} and $M_1 = |(M_2' -  v|\dot{\psi}_d|)\cos\zeta|/\eta$ into \eqref{eq:dzeta_5}, we obtain
	\begin{align}
		\dot{\zeta}&= -\frac{k_1v\sin\zeta \pm \dfrac{\left|\left(
				M_2' 
				-  v |\dot{\psi}_d| \right)\cos\zeta
				\right|}{\eta}}{v \cos\zeta} .\label{eq:dzeta_6}
	\end{align}
	Tacking limit as $\zeta \rightarrow \pi/2$ yields  
	\begin{align}
		\lim_{\zeta \to \pi/2} \dot{\zeta} = \lim_{\zeta \to \pi/2} \left[-\frac{k_1 v\sin\zeta \pm \dfrac{|(M_2' -  v|\dot{\psi}_d|)\cos\zeta|}{\eta}}{v\cos\zeta}\right]. \label{eq:dzeta_7}
	\end{align}
	
	Following the same procedure as Case 1, we consider two cases, $\cos\zeta>0$ and $\cos\zeta<0$. In both scenarios, applying \text{L'H\^opital's} rule to solve the $\dfrac{0}{0}$ indeterminate forms yields a finite, non-zero value for $\dot{\zeta}$.
	
	\textit{Case 3:} When both $\sigma_1(.)$ and $\sigma_2(.)$ are unsaturated. When both saturation function operates at the linear region, the \eqref{eq:dzeta_0} becomes 
	\begin{align}
		\dot{\zeta} &= -\frac{k_1d_2+k_1k_2d_1+k_2d_2}{v \cos\zeta} =-\frac{(k_1+k_2)v\sin\zeta+k_1k_2d_1}{v \cos\zeta}. \label{eq:dzeta_8}
	\end{align}
	It can be observed from \eqref{eq:dzeta_8} that when $\zeta \to \pm (\pi/2)$, $\dot{\zeta} \to \infty$. Consequently, $\zeta = \pm (\pi/2)$ is not an attractor for this case as well. Similarly, we can prove the same for $\zeta=-\pi/2$ also.
	
	\begin{figure*}[!ht]
		\centering
		\begin{subfigure}{0.33\linewidth}
			\centering
			\includegraphics[width=0.70\textwidth]{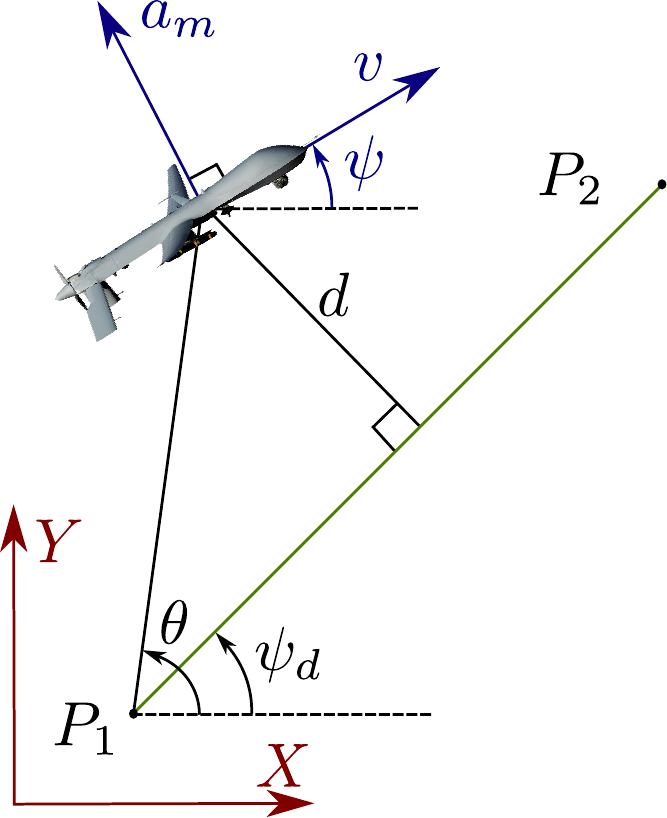}
			\caption{Straight-line path}
			\label{fig:2a}
		\end{subfigure}%
		\begin{subfigure}{0.33\linewidth}
			\centering
			\includegraphics[width=0.65\textwidth]{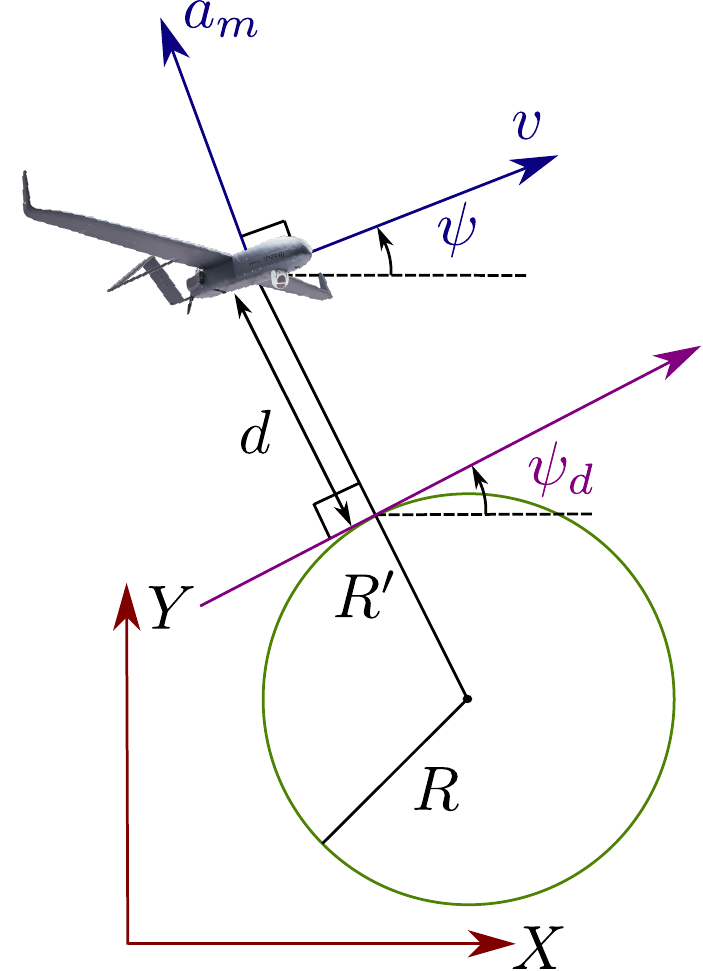} 
			\caption{Circular orbit}
			\label{fig:2b}
		\end{subfigure}
		\begin{subfigure}{0.33\linewidth}
			\centering
			\includegraphics[width=\textwidth]{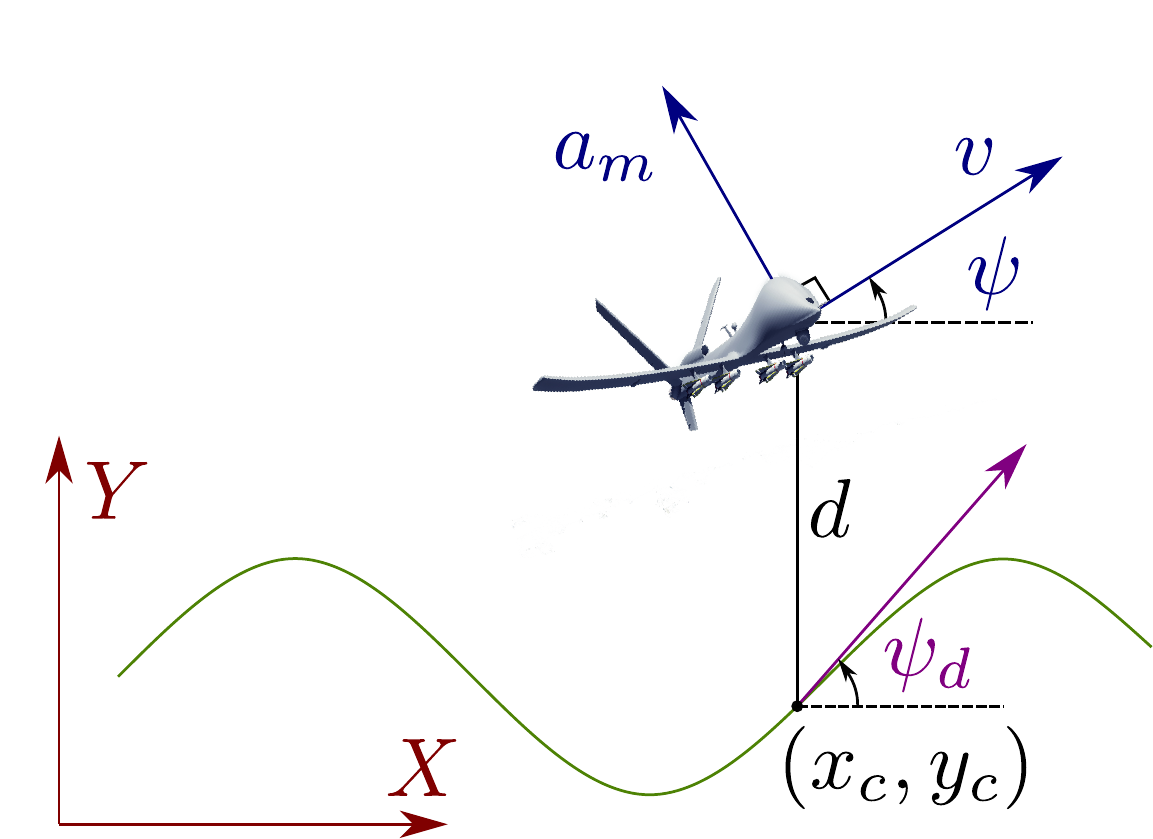} 
			\caption{Sinusoidal path}
			\label{fig:2c}
		\end{subfigure}
		\caption{Planar generic paths.}
		\label{fig:paths}
	\end{figure*}
	
	We now present three special path constructs, namely straight-line, orbital, and sinusoidal paths, as shown in \Cref{fig:paths}. The straight-line path formed by joining two points, $P_1$ and $P_2$, in Euclidean space, as shown in \Cref{fig:2a}. Without loss of generality, we consider the UAV is located a distance of $R$ from point $P_1$ and subtends an angle $\theta$ from the reference. Under such a scenario, one can calculate the distance $d$ using the geometry as $d=R\sin(\theta-\psi_d)$, 
	and its rate of change is given by
	\begin{equation}\label{eq:d_dot_1}
		\dot{d}=v\sin(\psi-\psi_d). 
	\end{equation}
	
	For the orbital path, we consider a circle with center $C\in \mathbb{R}^2$ and radius $R\in \mathbb{R}_+$. The UAV is located at a distance $R'$ from the circle's center. The perpendicular distance $d$ between the UAV and the circular path and its rate are given by
	\begin{align}
		d=\left(R'-R\right), \, \dot{d}=v\sin(\psi-\psi_d), \label{eq:d_dot_test}
	\end{align}
	where $\psi_d$ is the angle made by the tangent vector at $P_c$ with the reference $x$-axis. The point $P_c$ is a point on a circular orbit, which is obtained by where the line $R'$ cuts the orbit. By geometry, one can obtain the angle $\psi_d$ as  
	\begin{align}
		\psi_d=\tan^{-1}\left( \frac{-x}{y} \right),
	\end{align}
	where $(x,y)$ is UAV's instantaneous position.
	
	We now consider a generic sinusoidal path as shown in \Cref{fig:2c}, with amplitude $A$ and frequency $\omega_c$. Note that the sinusoidal path can serve as a basis for any smooth path. The parametric equation of the curve is given by
	\begin{align}
		y_{c} = A \sin(\omega x_{c}), \label{eq:sinecurve}
	\end{align}
	where $x_{c}$ and $y_{c}$ denote the $x$ and $y$ coordinates of the sinusoidal curve. To compute the perpendicular distance $d$ from the UAV to the path, we use a local linear approximation of the curve at the projected $x$-position. The curve is treated as a straight line near the projected $x_{c}$ coordinate, and $d$ is calculated using the formula for the perpendicular distance between a point and a line
	\begin{align}
		d = \frac{y - y_{c}}{\sqrt{1 + \left(\frac{dy_{c}}{dx_{c}}\right)^2}},
	\end{align}
	where the slope of the curve is
	\begin{align}
		\frac{dy_{c}}{dx_{c}} = A \omega \cos(\omega x_c).
	\end{align}
	The rate of change of $d$ will remains the same as in \eqref{eq:d_dot_test} and $\psi_d$ is obtained as
	\begin{align}
		\psi_d = \tan^{-1}\left (\frac{dy_{c}}{dx_{c}} \right).
	\end{align}
	It is important to note that while $d$ depends on the geometry of the path, $\dot{d}$ remains the same for all curves. This property indicates that the proposed technique is applicable to any generic smooth path. 
	
	% \subsection{Three-dimensional path following guidance}
	We now derive guidance law for any generic smooth 3D path. As we decoupled the dynamics into two 2D planes (horizontal and vertical), we derived the guidance law by following the same procedure as for the 2D path-following case. First, we focused on deriving the guidance law for the horizontal plane. To design a controller, we define horizontal plane states as horizontal cross-track error, $d_{1,h}\coloneqq d_h$, and its time derivative, $d_{2,h}\coloneqq \dot{d}_h$. The rate of change of horizontal cross-track error $\dot{d}_h$ is determined by the component of the UAV's velocity vector orthogonal to the path tangent in the local horizontal plane, and the geometric relation is described as     
	\begin{align}
		\dot{d}_{1,h}=d_{2,h}= v\cos\gamma\sin(\chi-\chi_d), \label{eq:dot d_1h}
	\end{align}
	where $\chi_d$ is the desired heading angle. By differentiating \eqref{eq:dot d_1h} with respect to time, we obtained $\dot{d}_{2,h}$ as
	\begin{align}
		\dot{d}_{2,h} &= v \cos\gamma \cos(\chi - \chi_d) (\dot{\chi} - \dot{\chi}_d) , \\
		&= v \cos\gamma \cos(\chi - \chi_d) \left( \frac{a_h}{v \cos\gamma} - \dot{\chi}_d \right)  ,\\
		&= \cos(\chi - \chi_d) a_h - v \cos\gamma\,\dot{\chi}_d \cos(\chi - \chi_d). \label{eq:dot d_2h}
	\end{align}
	In order to make error dynamics into double integrator form, we choose $a_h$ as
	\begin{align}
		a_h = \frac{u_h + v \cos\gamma \dot{\chi}_d \cos(\chi - \chi_d)}{\cos(\chi - \chi_d)},
	\end{align}
	where $u_h$ is the new horizontal plane control input. We are now in a position to define our constrained input guidance law for the horizontal plane. The following theorem proposes the guidance law that guarantees the horizontal error converges to zero.
	\begin{theorem}
		Consider the horizontal plane cross-track error dynamics given in  \eqref{eq:dot d_1h} and \eqref{eq:dot d_2h}. Let $M_{2,h}$ be defined as
		\begin{align}
			M_{2,h}= \left|\left(M_{2,h}'-v|\cos(\gamma)\dot{\chi}_d|\right)\cos(\chi-\chi_d)\right| ,
		\end{align} 
		where $M_{2,h}'>a_h^{\max}$  is a prescribed upper bound on horizontal acceleration $a_h$, and $M_{1,h}$ is chosen such that $M_{1,h} \leq {M_{2,h}}/{2}$. If the bounded guidance law
		\begin{align}
			u_h = -\sigma_{2,h}\left( k_{1,h}\dot{d}_h 
			+ \sigma_{1,h}\big( k_{1,h}k_{2,h}d_h + k_{2,h}\dot{d}_h \big) \right),
		\end{align}
		is applied to horizontal plane, where $k_{1,h}, k_{2,h} > 0$ are positive constant control gains, and 
		$\sigma_{1,h}(\cdot)$ and $\sigma_{2,h}(\cdot)$ are linear saturation 
		functions with saturation levels $M_{1,h}$ and $M_{2,h}$, respectively, 
		then the control input satisfies $|a_h| \leq M_{2,h}'$ for all $t \ge 0$, and the horizontal cross-track error 
		exponentially converges to zero.
	\end{theorem}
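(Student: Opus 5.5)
The plan is to reduce the horizontal-plane problem to the planar setting already handled by \Cref{thm:u} and its corollary, and then reuse those arguments with $v$ replaced by $v\cos\gamma$ and $\psi,\psi_d$ replaced by $\chi,\chi_d$.

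\textbf{Step 1 (double integrator).} Substituting the stated choice of $a_h$ into \eqref{eq:dot d_2h} cancels the curvature term and gives $\dot d_{1,h}=d_{2,h}$, $\dot d_{2,h}=u_h$. This is exactly the double integrator \eqref{eq:dynamics}.

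\textbf{Step 2 (transformed coordinates).} Introduce $h_{1,h}=k_{1,h}k_{2,h}d_h+k_{2,h}\dot d_h$ and $h_{2,h}=k_{1,h}\dot d_h$ through the invertible matrix $\bm T_{\rm hd}$ with gains $k_{1,h},k_{2,h}$. The dynamics become $\dot h_{1,h}=k_{2,h}h_{2,h}+k_{2,h}u_h$ and $\dot h_{2,h}=k_{1,h}u_h$, and the control law reads $u_h=-\sigma_{2,h}(h_{2,h}+\sigma_{1,h}(h_{1,h}))$. This is precisely the form treated in \Cref{thm:u}.

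\textbf{Step 3 (nested Lyapunov argument).} Repeat the argument of \Cref{thm:u} verbatim, first for $h_{2,h}$ and then for $h_{1,h}$.
\begin{itemize}
\item Take $\mathcal V_2=h_{2,h}^2/2$. Since $M_{1,h}\le M_{2,h}/2$, $\dot{\mathcal V}_2<0$ outside $|h_{2,h}|\le M_{2,h}/2$, so $h_{2,h}$ enters this set in finite time. Meanwhile $h_{1,h}$ stays bounded over finite intervals because the right-hand side is globally Lipschitz.
\item Inside that set the outer saturation is linear. One caveat: with the non-strict inequality $M_{1,h}\le M_{2,h}/2$, the argument $|h_{2,h}+\sigma_{1,h}|$ may reach $M_{2,h}$. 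This is harmless, since $\sigma_{2,h}$ is still the identity on the closed interval. After this, $\dot h_{1,h}=-k_{2,h}\sigma_{1,h}(h_{1,h})$.
\item Take $\mathcal V_1=h_{1,h}^2/2$. This drives $h_{1,h}$ into $|h_{1,h}|\le M_{1,h}$ in finite time.
\end{itemize}
From then on the closed loop is linear with system matrix $\bm Q$ and eigenvalues $-k_{1,h},-k_{2,h}$. Hence $\bm h\to0$ exponentially, and so does $(d_h,\dot d_h)=\bm T_{\rm hd}^{-1}\bm h$.

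\textbf{Step 4 (acceleration bound).} Follow the proof of the corollary. Since $|u_h|\le M_{2,h}$, the triangle inequality gives
\[
|a_h|\le \bigl|M_{2,h}'-v|\cos\gamma\,\dot\chi_d|\bigr|+v|\cos\gamma\,\dot\chi_d|.
\]
This equals $M_{2,h}'$ provided $v|\cos\gamma\,\dot\chi_d|\le M_{2,h}'$. That condition follows from the horizontal analogue of \Cref{remark:dpsi_d}, namely $v|\cos\gamma\,\dot\chi_d|\le a_h^{\max}<M_{2,h}'$, which I would state as an explicit feasibility assumption.

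\textbf{Expected main obstacle.} The difficulty is that $M_{2,h}$ is state dependent: it depends on $\chi-\chi_d$, $\gamma$ and $\dot\chi_d$. The Lyapunov set $\{|h_{2,h}|\le M_{2,h}/2\}$ therefore moves in time, and $M_{2,h}$ vanishes when $\cos(\chi-\chi_d)=0$. To handle this I would argue as the paper does: the singular configuration $\chi-\chi_d=\pm\pi/2$ is not an attractor, by the same three-case analysis of $\dot\zeta$ with $\zeta=\chi-\chi_d$. Near the path, $\cos(\chi-\chi_d)$ is bounded away from zero, so $M_{2,h}$ is bounded below by a positive constant and the invariance arguments carry over. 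I would treat $\gamma$ as evolving under the vertical loop, with $\cos\gamma>0$, so it enters only as a bounded time-varying parameter that the horizontal argument does not need to control.
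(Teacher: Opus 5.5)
Your plan is the paper's own route: the paper simply defers to \Cref{thm:u} and its corollary. Steps 1, 2 and 4 are fine, and Step 4 is correct once the horizontal analogue of \Cref{remark:dpsi_d}, $v\cos\gamma\,|\dot\chi_d|\le a_h^{\max}$, is assumed, as you propose.

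\textbf{The gap is in Step 3, exactly where you flag it.} The paper's proof of \Cref{thm:u} has the same hole, because it treats $M_2$ as a constant. Your repair does not close it. Non-attractivity of $\zeta:=\chi-\chi_d=\pm\pi/2$ only says $\dot\zeta\neq0$ there. It gives no positive lower bound on $|\cos\zeta|$ during the transient, which is precisely when the $\mathcal V_2$ step is used, and ``near the path'' presupposes the conclusion.

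In the $h$-coordinates, the inference ``$\dot{\mathcal V}_2<0$ outside $\mathcal D_2$, hence finite-time entry'' actually fails. We have $|\dot h_{2,h}|=k_{1,h}|u_h|\le k_{1,h}M_{2,h}\to0$ as $\cos\zeta\to0$, so the decrease is not uniform. At $\zeta=\pm\pi/2$ one even has $\dot{\mathcal V}_2=0$ although $h_{2,h}\notin\mathcal D_2$. Likewise, $\dot{\mathcal V}_1<0$ outside $\{|h_{1,h}|\le M_{1,h}\}$ does not make that set forward invariant when its level can shrink along the trajectory.

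\textbf{What closes the gap}, at least when $V:=v\cos\gamma$ and $|\dot\chi_d|$ are constant, is to run the first step in the heading error. Put $c:=(M_{2,h}'-V|\dot\chi_d|)/V>0$ and $M_{1,h}=M_{2,h}/\eta$. Then $h_{2,h}=k_{1,h}V\sin\zeta$, $M_{2,h}=cV|\cos\zeta|$, and $\dot\zeta=u_h/(V\cos\zeta)$. Hence $\mathcal D_2$ is the fixed cone $k_{1,h}|\tan\zeta|\le c/2$. Equivalently it is a fixed interval in $h_{2,h}$, since $|\cos\zeta|=\sqrt{1-h_{2,h}^2/(k_{1,h}V)^2}$ is a function of $h_{2,h}$.

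Outside this cone, for $|\zeta|<\pi/2$ (the case $\cos\zeta<0$ is symmetric about $\zeta=\pi$), $|\zeta|$ decreases at rate at least $c(\tfrac{1}{2}-\tfrac{1}{\eta})$. This gives finite-time entry and invariance, but only for $\eta>2$. So the non-strict $M_{1,h}\le M_{2,h}/2$ does matter, though not where you looked: it kills the uniform entry rate, not linearity on the boundary.

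Inside the cone, $|\cos\zeta|$ is bounded below, so $M_{1,h}\ge\underline{M}_{1,h}$ for some constant $\underline{M}_{1,h}>0$. The second step should then use the fixed set $\{|h_{1,h}|\le\underline{M}_{1,h}\}$. That set is forward invariant and lies in the linear region of $\sigma_{1,h}$ at all later times.

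If $\dot\chi_d$ or $\gamma$ varies, then $c$ varies and the cone moves, so you need an extra hypothesis. For example, $c(t)\in[\underline{c},\bar{c}]$ with $\bar{c}/\underline{c}<\eta-1$ lets the fixed cone $k_{1,h}|\tan\zeta|\le\underline{c}(1-\tfrac{1}{\eta})$ play the role of $\mathcal D_2$.

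\textbf{Finally, $\gamma$ is not a parameter the horizontal argument can ignore.} Besides entering $M_{2,h}$, if $\dot\gamma\neq0$ then differentiating $d_{2,h}=v\cos\gamma\sin\zeta$ produces $-v\sin\gamma\,\dot\gamma\sin\zeta$. The paper's $\dot d_{2,h}$ equation omits this term, so your Step 1 yields an exact double integrator only for constant $\gamma$.
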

	\begin{proof}
		The proof of the above theorem is similar to that of \Cref{thm:u}, thus omitted here. 
	\end{proof}
	Similarly, we derive guidance law for the vertical plane. First, we define our vertical plane states as vertical cross-track error, $d_{1,v}\coloneqq d_v$, and its time derivative, $d_{2,v}\coloneqq \dot{d}_v$. The time derivative of vertical cross-track error is described as
	\begin{align}
		\dot{d}_v = v \sin(\gamma - \gamma_d), \label{eq:dot d_1v}
	\end{align}
	where $\gamma_d$ is the desired flight path angle. By differentiating \eqref{eq:dot d_1v} with respect time, we obtained $\dot{d}_{2,v}$ as
	\begin{align}
		\dot{d}_{2,v} &= v \cos(\gamma - \gamma_d) (\dot{\gamma} - \dot{\gamma}_d),\\
		&= \cos(\gamma - \gamma_d) a_v - v \cos(\gamma -\gamma_d)\dot{\gamma}_d. \label{eq:dot d_2v}
	\end{align}
	We now make the above dynamics into double integrator form by choosing $a_v$ as
	\begin{align}
		a_v = \frac{u_v + v \dot{\gamma}_d \cos(\gamma-\gamma_d)}{\cos(\gamma-\gamma_d)},
	\end{align}
	where $u_v$ is the new vertical plane control input. In the following theorem, we propose the guidance law that guarantees the vertical plane cross-track error converges to zero.
	\begin{theorem}
		Consider vertical plane cross-track error dynamics given by \eqref{eq:dot d_1v}, \eqref{eq:dot d_2v}. Let $M_{2,v}$ be defined as
		\begin{align}
			M_{2,v}= \left|\left(M_{2,v}'\cos(\gamma-\gamma_d)-v|\dot{\gamma}_d|\cos(\gamma-\gamma_d)\right)\right|,
		\end{align}
		where $M_{2,v}'> a_v^{\max} $ is the maximum bound on vertical acceleration $a_v$, and $M_{1,v}$ is chosen as $M_{1,v}\leq {M_{2,v}}/{2}$. If the bounded guidance law
		\begin{align}
			u_v = -\sigma_{2,v}\Big( k_{1,v}\dot{d}_v 
			+ \sigma_{1,v}\big( k_{1,v}k_{2,v}d_v + k_{2,v}\dot{d}_v \big) \Big),
		\end{align}
		is applied to the vertical plane, where $k_{1,v}, k_{2,v} > 0$ are positive constant control gains, and 
		$\sigma_{1,v}(\cdot)$ and $\sigma_{2,v}(\cdot)$ are linear saturation 
		functions with saturation levels $M_{1,v}$ and $M_{2,v}$, respectively, 
		then the control input satisfies $|a_v| \leq M_{2,v}'$ for all $t \ge 0$, and the vertical cross-track errors exponentially 
		converge to zero.
	\end{theorem}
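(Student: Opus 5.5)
The plan is to reduce the vertical-plane problem to the planar setting of \Cref{thm:u} and the subsequent corollary, exactly as the horizontal-plane theorem does.

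\textbf{Step 1: reduce to a double integrator.} Set $d_{1,v}=d_v$ and $d_{2,v}=\dot d_v$. Substituting the chosen $a_v$ into \eqref{eq:dot d_2v} cancels the curvature term $v\dot\gamma_d\cos(\gamma-\gamma_d)$. This gives $\dot d_{1,v}=d_{2,v}$ and $\dot d_{2,v}=u_v$, which is the form \eqref{eq:dynamics}.

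\textbf{Step 2: change coordinates.} Introduce
\begin{align*}
h_{1,v}=k_{1,v}k_{2,v}d_v+k_{2,v}\dot d_v,\qquad h_{2,v}=k_{1,v}\dot d_v,
\end{align*}
that is, $\bm h_v=\bm T_{\rm hd}\bm d_v$ with gains $k_{1,v},k_{2,v}$. In these coordinates $u_v=-\sigma_{2,v}(h_{2,v}+\sigma_{1,v}(h_{1,v}))$, which is precisely \eqref{eq:input}.

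\textbf{Step 3: convergence.} Repeat the two-stage Lyapunov argument of \Cref{thm:u}.
\begin{itemize}
\item With $\mathcal V_2=h_{2,v}^2/2$: since $M_{1,v}\le M_{2,v}/2$, we get $\dot{\mathcal V}_2<0$ whenever $|h_{2,v}|>M_{2,v}/2$. So $h_{2,v}$ reaches $\{|h_{2,v}|\le M_{2,v}/2\}$ in finite time, while $h_{1,v}$ stays bounded on finite intervals because the right-hand side is Lipschitz.
\item In that set, $\sigma_{2,v}$ is linear and $\dot h_{1,v}=-k_{2,v}\sigma_{1,v}(h_{1,v})$. With $\mathcal V_1=h_{1,v}^2/2$, $h_{1,v}$ enters $\{|h_{1,v}|\le M_{1,v}\}$ in finite time.
\item After that, the closed loop is linear with matrix $\bm Q$, whose eigenvalues are $-k_{1,v},-k_{2,v}$. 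Invertibility of $\bm T_{\rm hd}$ then gives exponential convergence of $d_v$ and $\dot d_v$.
\end{itemize}

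\textbf{Step 4: input bound.} Follow the corollary. Since $|u_v|\le M_{2,v}=|M_{2,v}'-v|\dot\gamma_d||\,|\cos(\gamma-\gamma_d)|$, the triangle inequality on the expression for $a_v$ gives
\begin{align*}
|a_v|\le |M_{2,v}'-v|\dot\gamma_d||+v|\dot\gamma_d|=M_{2,v}',
\end{align*}
using a vertical analogue of \Cref{remark:dpsi_d}, namely $v|\dot\gamma_d|\le a_v^{\max}<M_{2,v}'$. I will state this assumption explicitly. Non-singularity at $\gamma-\gamma_d=\pm\pi/2$ will be handled by the same three-case analysis of $\dot\zeta$ used for $\zeta=\psi-\psi_d$.

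\textbf{Main obstacle.} The saturation levels $M_{2,v}$ and $M_{1,v}$ are state-dependent through $\cos(\gamma-\gamma_d)$, so the Lyapunov argument with "constant" bounds is not literally valid. In particular, the invariant set for $h_{2,v}$ shrinks as $\cos(\gamma-\gamma_d)$ shrinks. My first attempt will be to fix $M_{1,v}$ as the constant $M_{1,v}\le\frac12|M_{2,v}'-v\dot\gamma_d^{\max}|\,c_0$, where $c_0$ is a positive lower bound on $|\cos(\gamma-\gamma_d)|$. I will obtain $c_0$ by showing that $|\dot d_v|=v|\sin(\gamma-\gamma_d)|$ stays below $v$ along trajectories once $h_{2,v}$ is in its invariant set. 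If that is not possible, I will adopt the paper's level of rigor and treat $M_{2,v}$ as frozen, noting that the argument then parallels \Cref{thm:u} verbatim.
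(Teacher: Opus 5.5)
Your proposal follows the paper's route: the paper proves this theorem by deferring to the argument of \Cref{thm:u} and its corollary, and that is the same reduction you carry out (feedback cancellation to a double integrator, the $\bm T_{\rm hd}$ change of coordinates, the two-stage Lyapunov argument, and the triangle-inequality bound on $a_v$). Your explicit vertical analogue of \Cref{remark:dpsi_d}, $v|\dot\gamma_d|\le a_v^{\max}<M_{2,v}'$, is genuinely needed for $|a_v|\le M_{2,v}'$, and the paper also glosses over the state dependence of $M_{2,v}$ through $\cos(\gamma-\gamma_d)$ that you flag, so your fallback of treating it as frozen matches the paper's level of rigor.
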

	\begin{proof}
		The proof follows a similar arguments to that of  \Cref{thm:u}, therefore, omitted here.
	\end{proof}
	
	\begin{figure*}[!ht]
		\centering
		\begin{subfigure}{0.48\linewidth}
			\centering
			\includegraphics[width=0.65\textwidth]{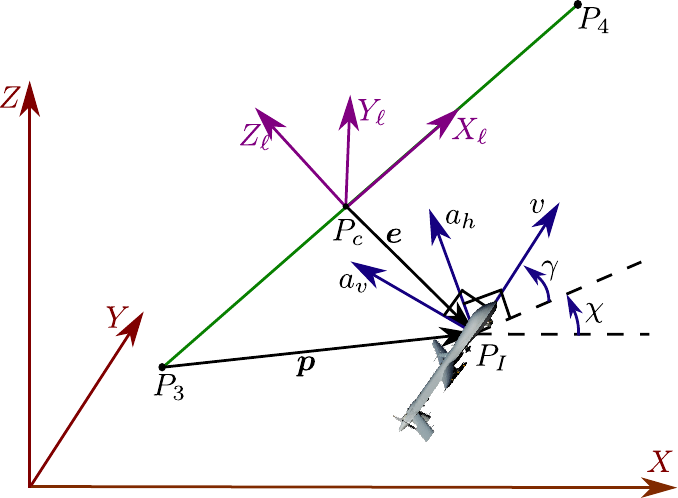}
			\caption{Straight-line path}
			\label{fig:3D_straightline_poblem}
		\end{subfigure}%
		\begin{subfigure}{0.4\linewidth}
			\centering
			\includegraphics[width=1\textwidth]{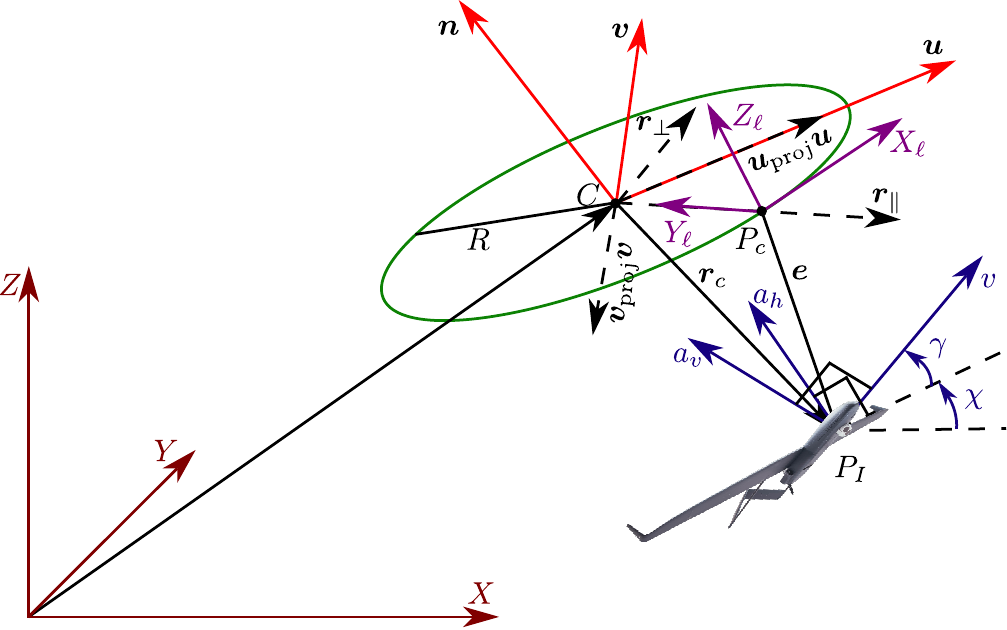} 
			\caption{Circular orbit}
			\label{fig:3D_circular_poblem}
		\end{subfigure}
		\caption{3D generic paths.}
		\label{fig:3D_generic_paths}
	\end{figure*}
	
	We now present two special 3D paths, namely straight-line and orbital paths, as shown in \Cref{fig:3D_generic_paths}. The straight-line path is constructed by joining two points, $P_3 \in \mathbb{R}^3$ and $P_4 \in \mathbb{R}^3$, in the 3-dimensional Euclidean space, as shown in \Cref{fig:3D_straightline_poblem}. The instantaneous position of the UAV is $P_I=\begin{bmatrix}
		x&y&z
	\end{bmatrix}^\top \in \mathbb{R}^3$. To find the cross-track errors, we first determine the point $P_c$ on the straight line closest to the UAV's instantaneous position. The direction of path is defined by unit vector $\bm{u}$, given by
	\begin{align}
		\bm{u}=\frac{P_4-P_3}{||P_4-P_3||}.
	\end{align}
	We define a new vector $\bm{p}$ from the point $P_3$ to the UAV's current position $P_I$ as $\bm{p}=P_I-P_3$. The projection of this vector onto the straight-line path yields $\bm{p}_{\mathrm{proj}}= \bm{u}^\top \bm{p}$. Consequently, the closest point on line $P_c$ is described as
	\begin{align}
		P_c=P_3+\bm{p}_{\mathrm{proj}}\bm{u}.
	\end{align}
	The position error vector, $\bm{e}$, is defined as the vector difference between the UAV's current position and the closest point $P_c$, described as $\bm{e}=P_I-P_c=
	\begin{bmatrix}
		e_1& e_2& e_3
	\end{bmatrix}^\top.$ To obtain horizontal and vertical errors in local frame, a path-fixed local coordinate frame $X_\ell Y_\ell Z_\ell$ is constructed at the closest point $P_c$ on the straight-line path. The first basis vector $X_\ell$ is defined as the unit tangent vector to the path at point $P_c$, that is $X_\ell = \bm{u}$. The second basis vector $Y_\ell$ is constructed to lie in the horizontal plane and is perpendicular to the path tangent direction $X_\ell$. To achieve this, the horizontal component of the error vector is first extracted by setting the vertical component to zero, yielding $\bm{e}_{12} = \begin{bmatrix}
		e_1 & e_2 & 0
	\end{bmatrix}^\top$. This vector is then orthogonalized with respect to $X_\ell$ using the Gram-Schmidt procedure to obtain the $Y_\ell$ as 
	\begin{align}
		\bm{Y}_\ell = \frac{\bm{e}_{12} - (\bm{X}_\ell^\top \bm{e}_{12})\bm{X}_\ell}{\lVert \bm{e}_{12} - (\bm{X}_\ell^\top \bm{e}_{12})\bm{X}_\ell \rVert}.
	\end{align} 
	The third basis vector $Z_l$ is then obtained as the cross product $Z_\ell = X_\ell \times Y_\ell$. Let the $\bm{I}_\ell,$ $\bm{J}_\ell$ and $\bm{K}_\ell$ are unit vector along $X_\ell,$ $Y_\ell$ and $Z_\ell$ axes. The transformation matrix from the inertial frame to the local path frame is constructed as $\bm{T}_s = \begin{bmatrix}
		\bm{I}_\ell^\top & \bm{J}_\ell^\top & \bm{K}_\ell^\top
	\end{bmatrix}^\top$, and the error vector expressed in the local frame is given by 
	\begin{align}
		\bm{e}_\ell = \bm{T}_s \bm{e} = 
		\begin{bmatrix}
			\bm{I}_{\ell,1} & \bm{I}_{\ell,2} & \bm{I}_{\ell,3}\\
			\bm{J}_{\ell,1} & \bm{J}_{\ell,2} & \bm{J}_{\ell,3}\\
			\bm{K}_{\ell,1} & \bm{K}_{\ell,2} & \bm{K}_{\ell,3}
		\end{bmatrix}
		\begin{bmatrix}
			e_1\\ e_2\\ e_3
		\end{bmatrix} = 
		\begin{bmatrix}
			e_{\ell,1}\\ e_{\ell,2}\\ e_{\ell,3}
		\end{bmatrix}.
	\end{align}
	The horizontal and vertical cross-track errors in the local frame are defined as $d_h = \bm{J}_\ell^\top \bm{e} = e_{\ell,2}$ and $d_v = \bm{K}_\ell^\top \bm{e} = e_{\ell,3}$, respectively.
	
	To derive the time derivative of errors, we analyze the relative velocity between the UAV and the local frame. The UAV's velocity vector in the inertial frame is expressed as 
	\begin{align}
		\bm{V}=v
		\begin{bmatrix}
			\cos\gamma\cos\chi &
			\cos\gamma\sin\chi &
			\sin\gamma
		\end{bmatrix}^\top.
	\end{align}
	The desired tangential velocity of the local frame, $\bm{V}_{u}$, at point $P_c$ is obtained by projecting UAV velocity vector along the path tangent vector $\bm{u}$ and its expression can be obtained as $\bm{V}_{u}=\left( \bm{V}^\top \bm{u} \right) \bm{u}.$ The rate of change of the position error vector $\dot{\bm{e}}$ is the difference between UAV inertial velocity and tangential velocity of the local frame, and it is given as $\dot{\bm{e}}=\bm{V} - \bm{V}_u.$ Transforming the rate of change of error $\dot{\bm{e}}$ into the local frame results in $ \dot{\bm{e}}_\ell= \bm{T}_s\dot{\bm{e}}=
	\begin{bmatrix}
		\dot{e}_{l,1} & \dot{e}_{l,2} & \dot{e}_{l,3}        
	\end{bmatrix}^\top$, where $\dot{\bm{e}}_\ell$ is the rate of change of error in the local frame. Consequently, the time derivatives of the horizontal and vertical cross-track errors are obtained as $\dot{d}_h = \dot{e}_{\ell,2}$ and $\dot{d}_v = \dot{e}_{\ell,3}$, respectively. The desired heading angle $\chi_d$ and flight path angle $\gamma_d$ can be obtained as
	\begin{align}
		\chi_{d} = \tan^{-1}\left(\dfrac{X_{\ell,2}}{X_{\ell,1}}\right), \quad \gamma_{d} = \sin^{-1}(X_{\ell,3}).
	\end{align}
	For a straight line path, desired angles $\chi_d$ and $\gamma_d$ are constant, and their time derivatives are zero, that is, $\dot{\chi}_d=0$ and $\dot{\gamma}_d=0$.
	
	We now apply the proposed framework to a 3D circular path, that can be defined as the path starting with base circle of radius $R$ lying in the $X-Y$ plane (normal along $+Z$ axis). The orientation of this base circle is then generalized to an arbitrary spatial configuration by applying two sequential rotations: first, a pitch ($\theta$) about the $y$-axis, and then a roll ($\phi$) about the $x$-axis, both measured positive in the counter-clockwise direction. The resulting orientations transformation matrix, denoted as $\mathcal{R}_{c}$, is given by
	\begin{align}
		\mathcal{R}_{c}&=\mathcal{R}_x(\phi)\mathcal{R}_y(\theta)=
		\begin{bmatrix}
			1 & 0 & 0\\
			0 & \cos\phi & \sin\phi \\
			0 & -\sin\phi & \cos\phi
		\end{bmatrix}
		\begin{bmatrix}
			\cos\theta & 0 & -\sin\theta\\
			0 & 1 & 0\\
			\sin\theta & 0 & \cos\theta
		\end{bmatrix}, \nonumber\\
		&= 
		\begin{bmatrix}
			\cos\theta & 0 & -\sin\theta\\
			\sin\phi\sin\theta & \cos\phi & \sin\phi\cos\theta\\
			\cos\phi\sin\theta & -\sin\phi & \cos\phi\cos\theta
		\end{bmatrix}.
	\end{align}
	Note that the columns of $\mathcal{R}_{c}$ are the orthonormal basis vectors of the circle's local frame. The first column $\bm{u}$ is the direction that corresponds to the $\cos\alpha$ direction, the second column $\bm{v}$ is in the $\sin\alpha$ direction, and the third column $\bm{n}$ is the normal to the circle frame. Using this basis, any point on the circle is denoted as $\mathcal{Z}$, given by
	\begin{align}
		\mathcal{Z}=C+R(\bm{u}\cos\alpha+\bm{v}\sin\alpha),
	\end{align}
	where $C$ is center of circle and $\alpha$ is the angular parameter (from $0$ to $2\pi$). The resulting circular path is shown in \Cref{fig:3D_circular_poblem}. We now compute the closest point on the circular path $P_c$ from the UAV's current position $P_I=\begin{bmatrix}
		x&y&z
	\end{bmatrix}^\top$. Let $\bm{r}_c=P_I-C$ be the vector from the center of the circle to the UAV's current position. This vector is projected onto the circle plane to obtain its components along the basis vectors $\bm{u}$ and $\bm{v}$, obtained as $\bm{u}_{\mathrm{proj}}=\bm{r}_c^\top \bm{u}$ and $ \bm{v}_{\mathrm{proj}}=\bm{r}_c^\top \bm{v}$. The radial direction vector within the circle plane is obtained as
	\begin{align}
		\bm{r}_\parallel=\frac{\bm{u}_{\mathrm{proj}}\bm{u}+\bm{v}_{\mathrm{proj}}\bm{v}}{\sqrt{\bm{u}_{\mathrm{proj}}^2+\bm{v}_{\mathrm{proj}}^2}}   .
	\end{align}
	The tangential component at point $P_c$ is obtained as
	\begin{align}
		\bm{r}_\perp = \frac{\bm{n} \times \bm{r}_\parallel}{||\bm{n} \times \bm{r}_\parallel||}.
	\end{align}
	Thus, The closest point $P_c$ on the circular path is $P_c = C +R\bm{r}_\parallel.$
	
	Similar to straight-line path case, we define a local orthonormal path frame $X_\ell Y_\ell  Z_\ell$ such that $X_\ell$ is $\bm{r}_\perp$ the tangent component in the direction of motion, $Y_\ell$ is $-\bm{r}_\parallel$ in the inward radial direction, and $Z_\ell$ is $\bm{n}$ normal vector to the circle plane. The transformation matrix from the inertial frame to the local frame is $\bm{T}_c=
	\begin{bmatrix}
		\bm{I}_\ell^\top & \bm{J}_\ell^\top & \bm{K}_\ell^\top
	\end{bmatrix}^\top.$
	The position error in the inertial frame is defined as $\bm{e}=P_I-P_c$, then the error in the local frame is $\bm{e}_\ell=\bm{T}_c\,\bm{e} =
	\begin{bmatrix}
		e_{\ell,1} & e_{\ell,2} & e_{\ell,3}
	\end{bmatrix}^\top.$
	The horizontal and vertical cross-track errors $d_h$ and $d_v$ are described as
	\begin{align}
		d_h&=e_{\ell,2} = -\left(\sqrt{\bm{u}_{\mathrm{proj}}^2+\bm{v}_{\mathrm{proj}}^2}-R\right),\\
		d_v&=e_{\ell,3} =\bm{r}_c^\top\bm{n}.
	\end{align}
	For the UAV to follow the circular path, the UAV has to match the tangent velocity vector direction at $P_c$. Hence, the desired  velocity vector $\bm{v}_d$ is given by
	\begin{align}
		\bm{V}_d=v\,\bm{r}_\perp=
		\begin{bmatrix}
			v_{d,1} & v_{d,2} & v_{d,3}
		\end{bmatrix}^\top.
	\end{align}
	The velocity error in the local frame is $\bm{e}_{v,\ell} = \bm{T}_c(\bm{V}-\bm{V}_d)=
	\begin{bmatrix}
		e_{v,\ell,1} & e_{v,\ell,2} & e_{v,\ell,3}
	\end{bmatrix}^\top.$ The time derivative of horizontal and vertical cross-track errors are $\dot{d}_h=e_{v,\ell,2}$ and $\dot{d}_v=e_{v,\ell,3}$, respectively. The desired heading angle $\chi_d $ and flight path angle $\gamma_d$ can be obtained through
	\begin{align}
		\chi_d&=\tan^{-1}\left(\frac{v_{d,2}}{v_{d,1}}\right),~~
		\gamma_d=\sin^{-1}\left(\frac{v_{d,3}}{v}\right).
	\end{align}
	Since the path lies in a plane, $\gamma_d$ is constant  ($\dot{\gamma}_d=0$), while the heading angle is defined by the constant angular velocity of the circular path, $\dot{\chi}_d=\dfrac{v}{R}$.

	\section{Performance Evaluations}\label{sec:simulations}
	The performance of the proposed bounded-input path-following guidance law (\Cref{thm:u}) is demonstrated using numerical simulations for representative paths, namely straight-line, circular, and sinusoidal paths. Unless otherwise stated, the UAV's speed is fixed at $10\, \text{m/s}$, the maximum lateral acceleration bound is set to $a_{m\max} = M_2' = 10\ \text{m/s}^2$, with $M_1 = M_2/2.1$. The design parameters are chosen as $k_1 = k_2 = 1$. 
	
	We first consider a straight-line path from $P_1=[0,~0]^\top$ to $P_2=[200,~200]^\top$, as shown in \Cref{fig:2a}. Four initial conditions are chosen as: $(x,y,\psi)=(20,\,10,\,45^\circ)$, $(10,\,30,\,80^\circ)$, $(15,\,-15,\,135^\circ)$, and $(0,\,40,\,60^\circ)$. Under the proposed guidance law, the performance of the UAV is demonstrated through \Cref{fig:3}. As evident from \Cref{fig:3a}, the UAV converges to the desired straight-line path in all cases. The UAV's lateral acceleration remains within its prescribed bounds $\pm 10$ m/s$^2$ (\Cref{fig:3b}). One may notice from \Cref{fig:3d} that once both $\sigma_1(.)$ and $\sigma_2(.)$ become unsaturated simultaneously, the cross-track error $d$ and its derivative $\dot{d}$ exhibit exponential convergence to zero (see \Cref{fig:3c}).
	\begin{figure*}[!ht]
		\centering
		\begin{subfigure}{0.25\linewidth}
			\centering
			\includegraphics[width=\linewidth]{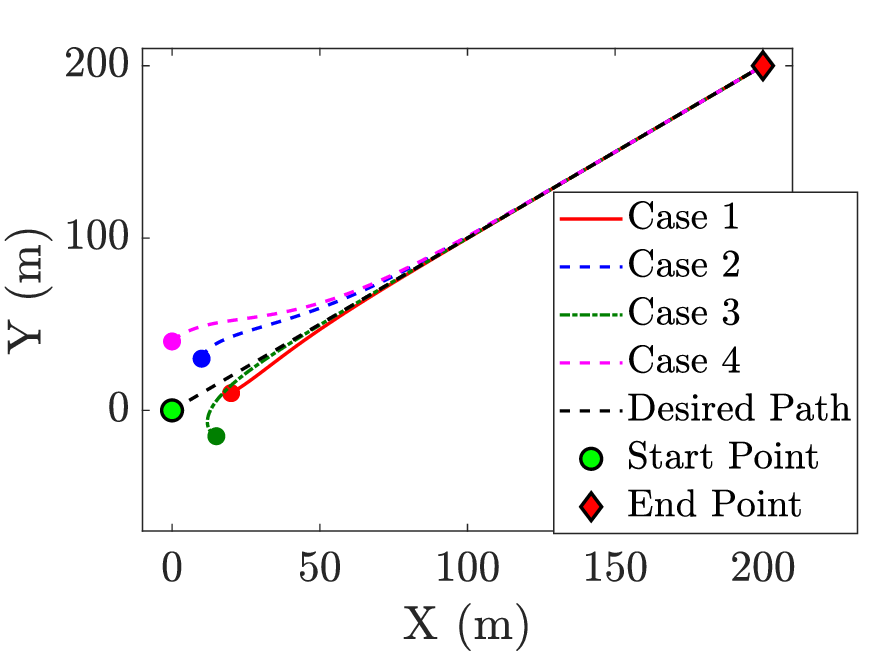}
			\caption{Trajectories.}
			\label{fig:3a}    
		\end{subfigure}%
		\begin{subfigure}{0.25\linewidth}
			\centering
			\includegraphics[width=\linewidth]{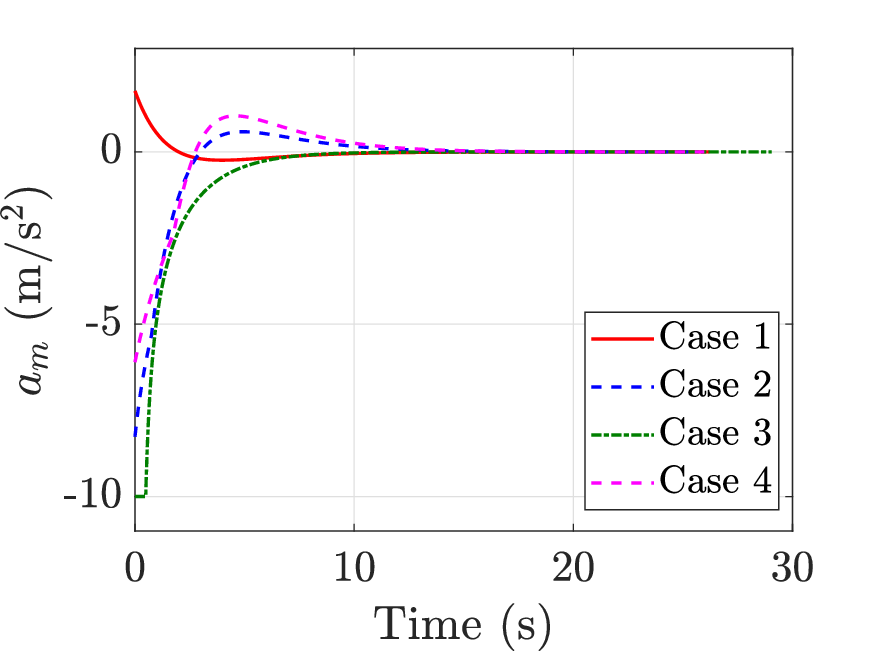}
			\caption{Lateral accelerations.}
			\label{fig:3b}    
		\end{subfigure}%
		\begin{subfigure}{0.25\linewidth}
			\centering
			\includegraphics[width=\linewidth]{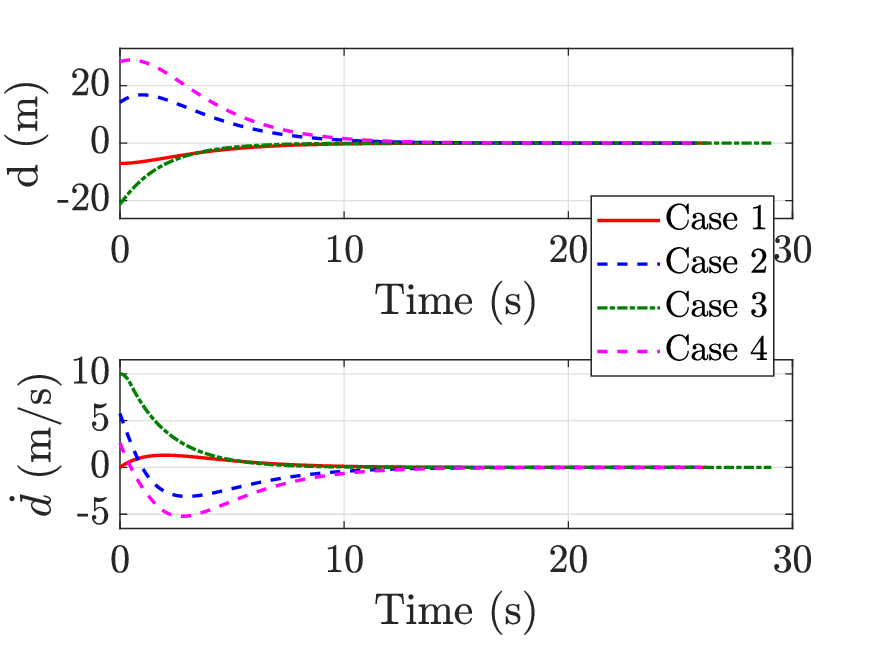}
			\caption{Error and its rate.}
			\label{fig:3c}    
		\end{subfigure}%
		\begin{subfigure}{0.25\linewidth}
			\centering
			\includegraphics[width=\linewidth]{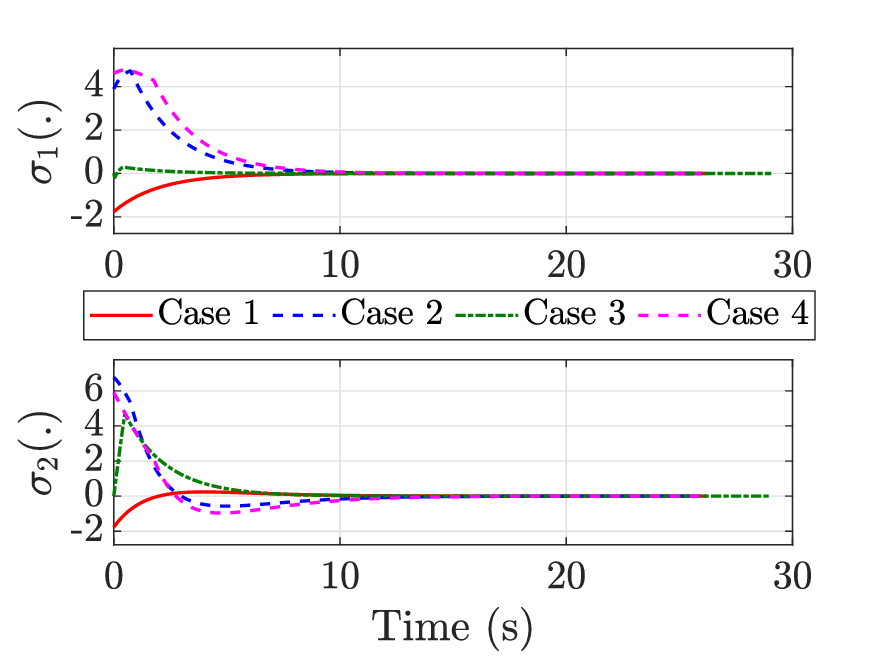}
			\caption{Saturation functions.}
			\label{fig:3d}    
		\end{subfigure}
		\caption{UAV's performance: straight-line path.}
		\label{fig:3}
	\end{figure*}
	
	Next, we consider a circular path of radius $20$ m as the desired path, shown in \Cref{fig:2b}. The performance of the UAV is evaluated under four distinct initial conditions: $(x,y,\psi)=(20,\,20,\,60^\circ)$, $(10,\,10,\,45^\circ)$, $(30,\,20,\,100^\circ)$, and $(-25,-40,130^\circ)$. Under the proposed guidance law \eqref{eq:input}, we illustrate the UAV's performance in \Cref{fig:4}. Unlike the straight-line scenario, the circular path requires a constant curvature to be maintained. Despite of this, the UAV is able to converge to the circular path from all four initial conditions (\Cref{fig:4a}), while respecting the bounded control
	inputs of $\pm 10$ m/s$^2$. Similar to the straight-line case, the cross-track error $d$ and its derivative $\dot{d}$ converge to zero for all cases, as illustrated in \Cref{fig:4c}, ensuring exponential convergence to the circular path.
	\begin{figure*}[!ht]
		\centering
		\begin{subfigure}{0.25\linewidth}
			\centering
			\includegraphics[width=\linewidth]{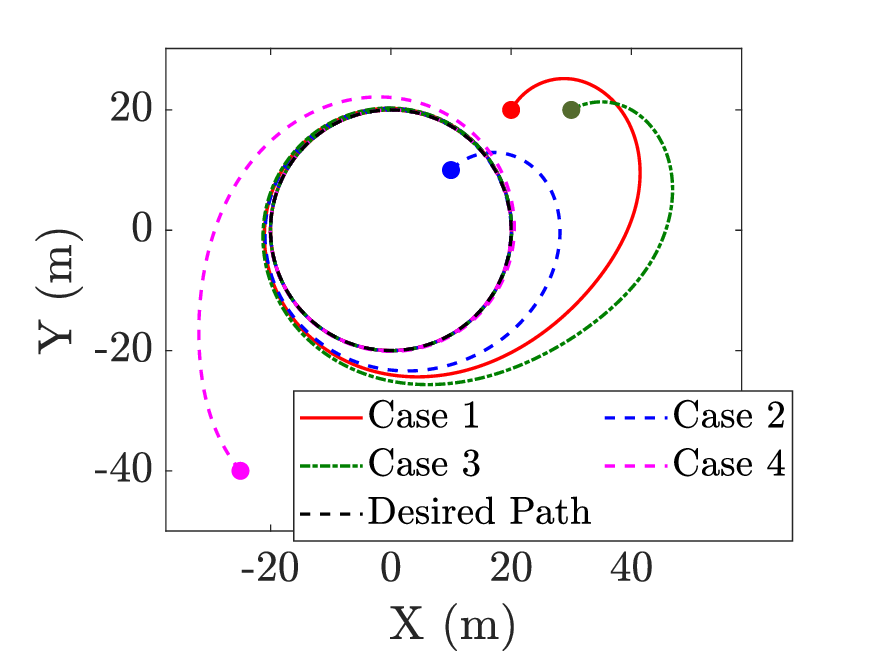}
			\caption{Trajectories.}
			\label{fig:4a}    
		\end{subfigure}%
		\begin{subfigure}{0.25\linewidth}
			\centering
			\includegraphics[width=\linewidth]{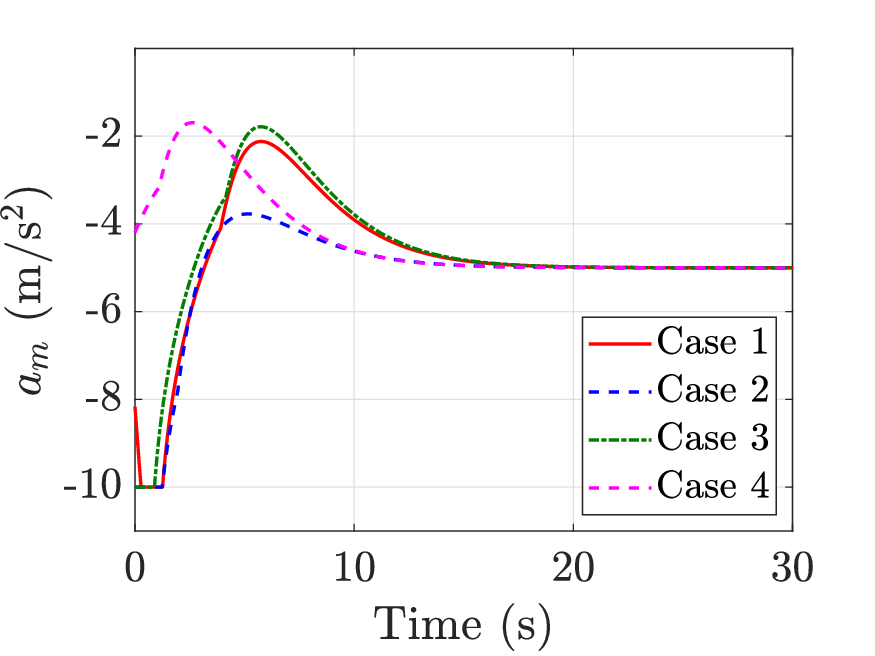}
			\caption{Lateral accelerations.}
			\label{fig:4b}    
		\end{subfigure}%
		\begin{subfigure}{0.25\linewidth}
			\centering
			\includegraphics[width=\linewidth]{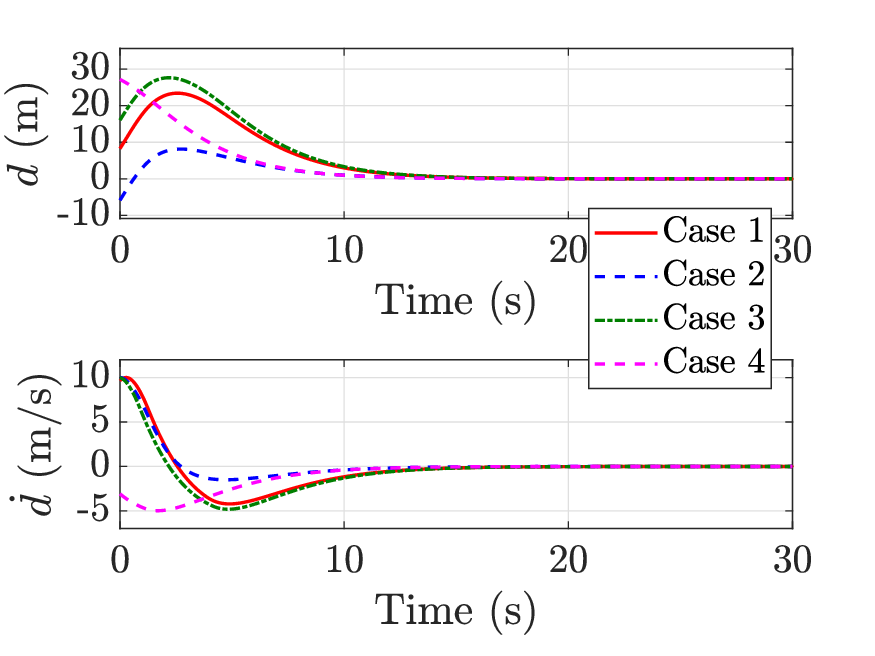}
			\caption{Error and its rate.}
			\label{fig:4c}    
		\end{subfigure}%
		\begin{subfigure}{0.25\linewidth}
			\centering
			\includegraphics[width=\linewidth]{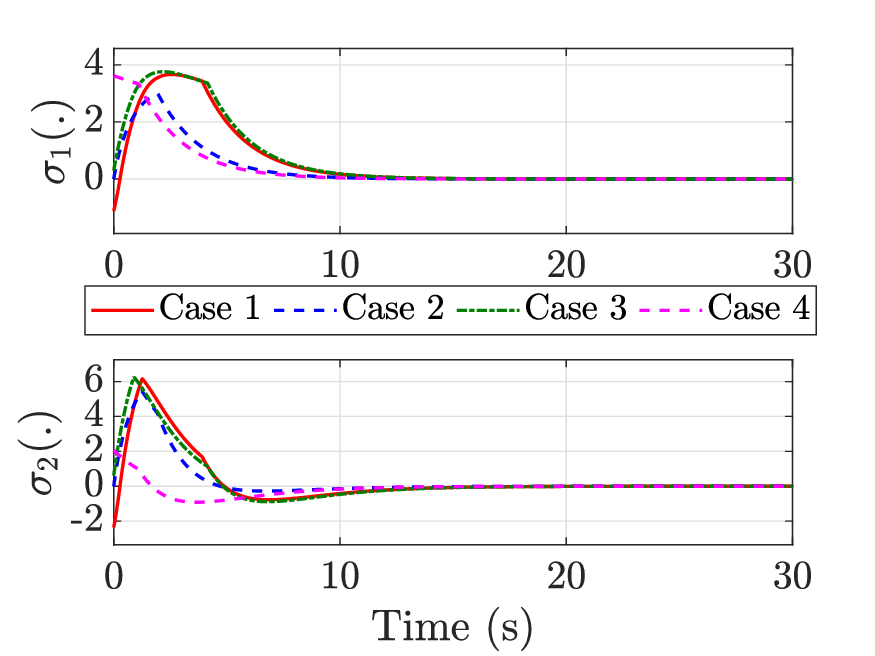}
			\caption{Saturation functions.}
			\label{fig:4d}    
		\end{subfigure}
		\caption{UAV's performance: following a circular path.}
		\label{fig:4}
	\end{figure*}
	
	We now demonstrate the performance of the proposed guidance law on the sinusoidal curve to show that our proposed controller converges to any arbitrary smooth path under input constraint. For this case, two distinct initial conditions are considered: $(x, y, \psi) = (15, 15, 30^\circ)$ and $(10, -10, 60^\circ)$. The sinusoidal curve parameters are chosen as $A=10$ m and $\omega=\frac{2\pi}{100}$ rad/s, respectively. The UAV performance under the proposed guidance law \eqref{eq:input} is illustrated in \Cref{fig:5}. The sinusoidal path introduces a continuously varying curvature, making it a more challenging test than either straight-line or circular cases. However, it is observed that the guidance law successfully steers the UAV onto the desired sinusoidal path by exhibiting almost similar behavior as for the straight-line and circular path.
	
	\begin{figure*}[!ht]
		\centering
		\begin{subfigure}{0.25\linewidth}
			\centering
			\includegraphics[width=\linewidth]{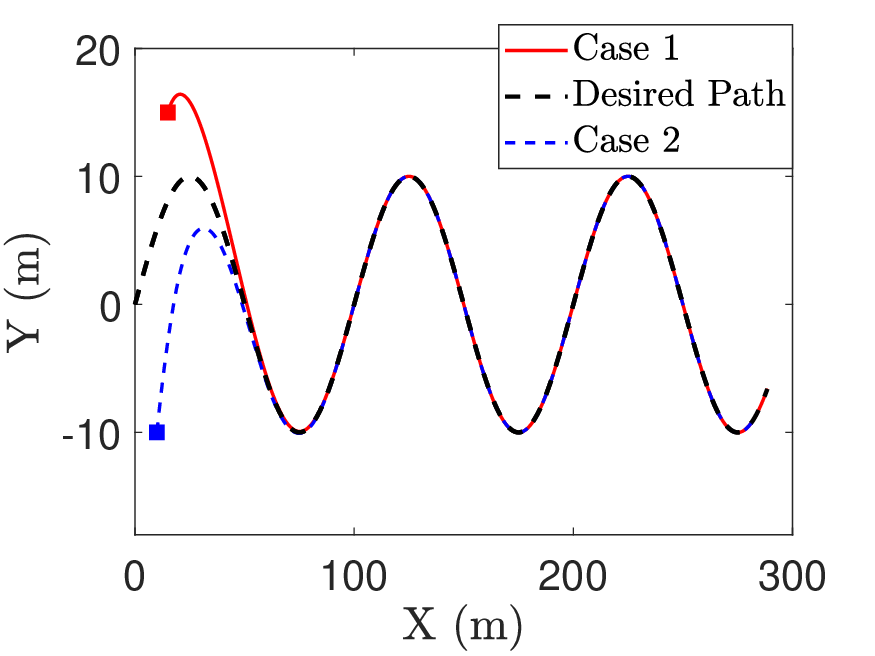}
			\caption{Trajectories.}
			\label{fig:5a}    
		\end{subfigure}%
		\begin{subfigure}{0.25\linewidth}
			\centering
			\includegraphics[width=\linewidth]{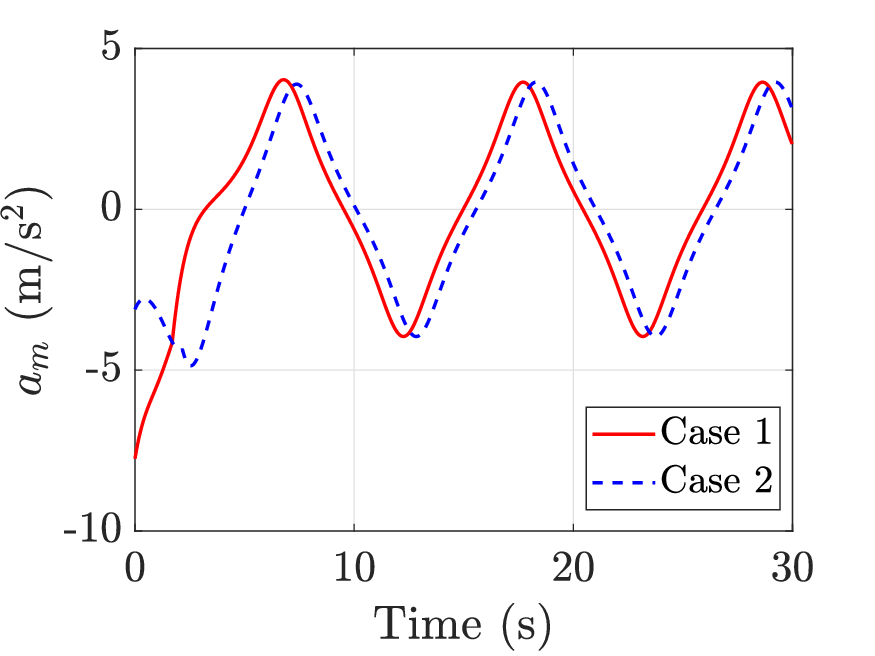}
			\caption{Lateral accelerations.}
			\label{fig:5b}    
		\end{subfigure}%
		\begin{subfigure}{0.25\linewidth}
			\centering
			\includegraphics[width=\linewidth]{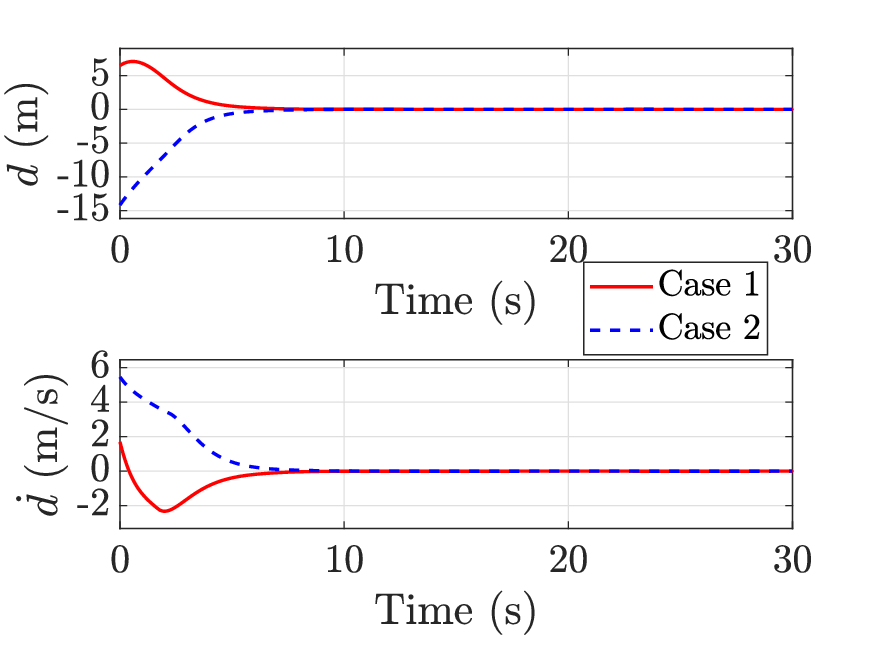}
			\caption{Error and its rate.}
			\label{fig:5c}    
		\end{subfigure}%
		\begin{subfigure}{0.25\linewidth}
			\centering
			\includegraphics[width=\linewidth]{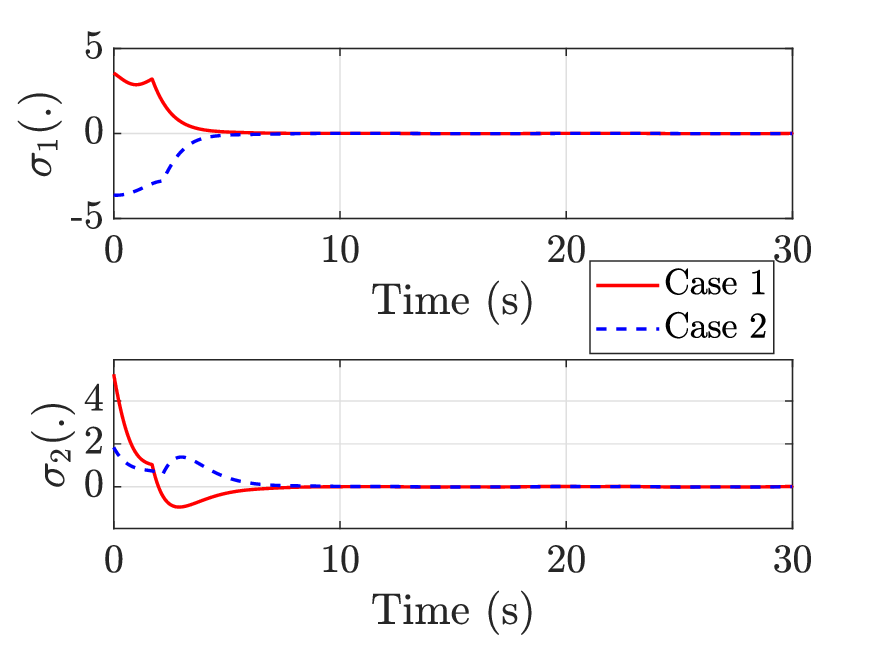}
			\caption{Saturation functions.}
			\label{fig:5d}    
		\end{subfigure}
		\caption{UAV's performance: following a sinusoidal path.}
		\label{fig:5}
	\end{figure*}
	
	We now demonstrate the robustness of the proposed guidance law. For that, we consider a scenario in which the UAV encounters a wind gust while following a circular path. The objective is to verify that the UAV can successfully return to its desired path after the wind gust ends, while maintaining bounded control input. The simulation parameters are chosen as follows: the radius of the circular path is 50 m, and the UAV initial conditions are $x=100$ m, $y=100$ m, $\psi=45^\circ$. A wind gust is characterized by a velocity of $5\sqrt{2}$ \si{m/s} at an angle of $45^\circ$ with reference to the x-axis, and it is active for 10 seconds, as illustrated in \Cref{fig:2D_gust_d}. Under the action of the proposed guidance law, the performance of the UAV is shown in \Cref{fig:2D_gust}. From \Cref{fig:2D_gust}, we observed that the UAV successfully converged to its desired path in the presence of the wind gust while maintaining bounded control input. As shown in \Cref{fig:2D_gust_c}, both cross-track error $d$ and its time derivative $\dot{d}$ converge to zero after the wind gust ends. This convergence behavior confirms that the proposed guidance law exhibits a robust performance in the presence of external disturbances.
	\begin{figure*}[!ht]
		\centering
		\begin{subfigure}{0.25\linewidth}
			\centering
			\includegraphics[width=\linewidth]{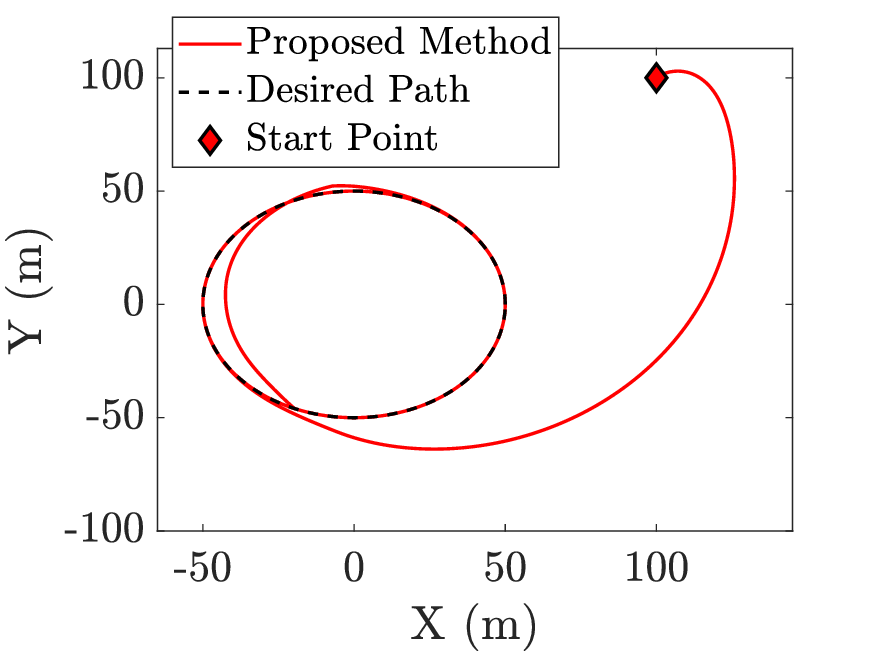}
			\caption{Trajectories.}
			\label{fig:2D_gust_a}    
		\end{subfigure}%
		\begin{subfigure}{0.25\linewidth}
			\centering
			\includegraphics[width=\linewidth]{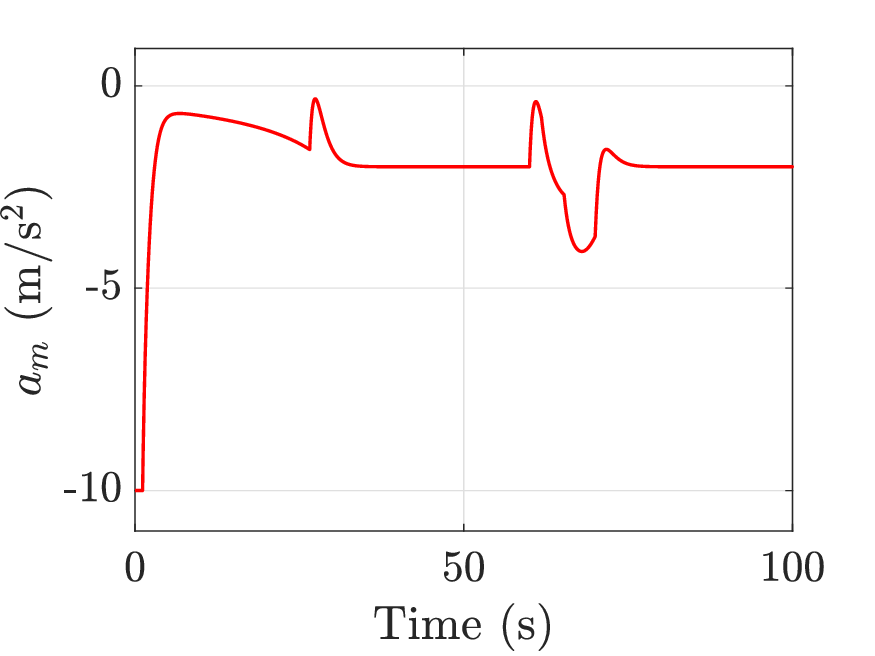}
			\caption{Lateral accelerations.}
			\label{fig:2D_gust_b}    
		\end{subfigure}%
		\begin{subfigure}{0.25\linewidth}
			\centering
			\includegraphics[width=\linewidth]{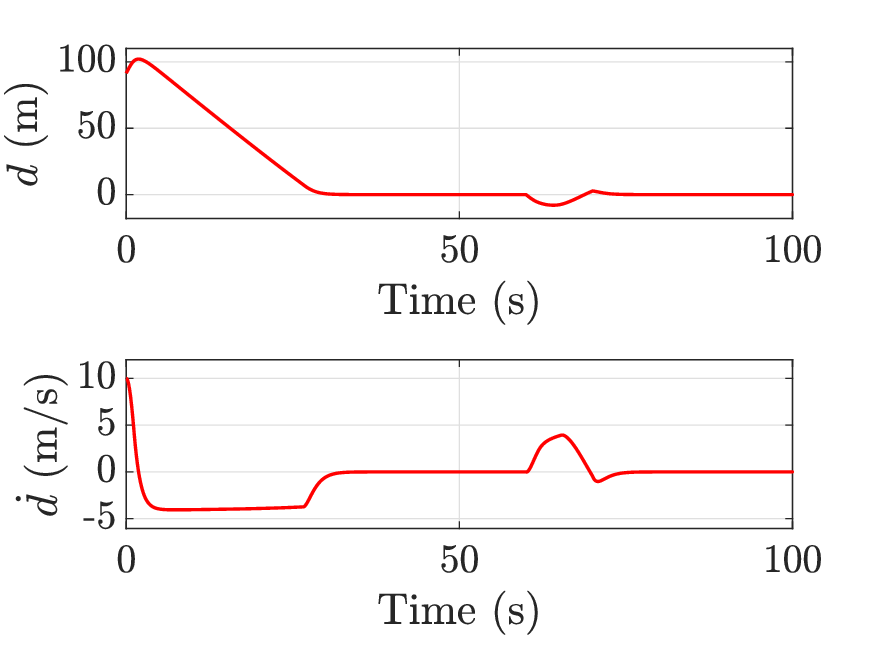}
			\caption{Error and its rate.}
			\label{fig:2D_gust_c}    
		\end{subfigure}%
		\begin{subfigure}{0.25\linewidth}
			\centering
			\includegraphics[width=\linewidth]{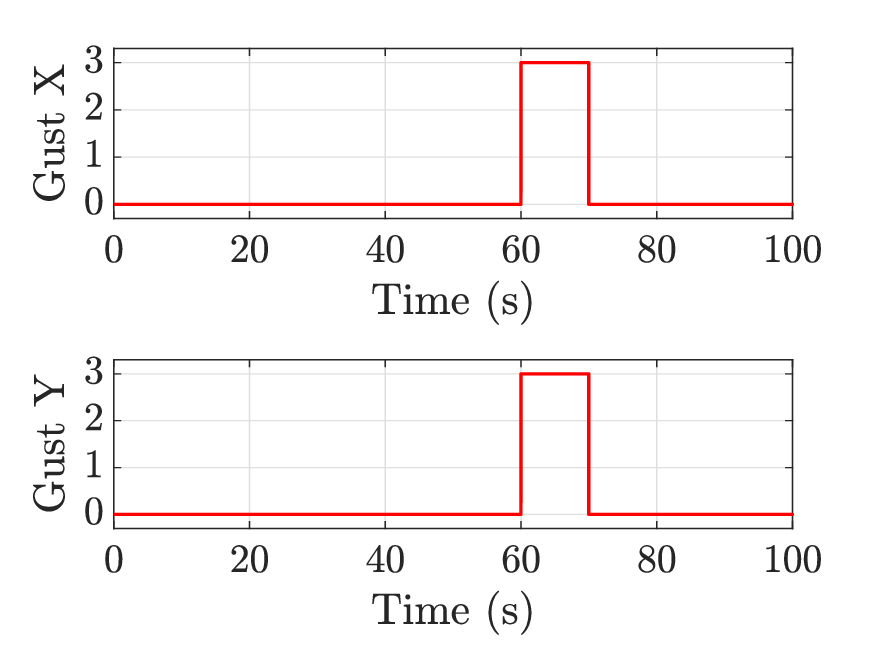}
			\caption{Saturation functions.}
			\label{fig:2D_gust_d}    
		\end{subfigure}
		\caption{UAV's performance: in the presence of gust.}
		\label{fig:2D_gust}
	\end{figure*}
	
	It is worth highlighting that the proposed constrained input path-following guidance laws drive the UAV onto the desired path, regardless of its initial engagement geometry with respect to the path. This, in turn, supports the claim that the proposed guidance strategy ensures global convergence of the UAV to the desired path. Moreover, the proposed strategy guides the UAV onto a desired path (smoothly), which supports the claim that the guidance strategy is generic in nature and will remain applicable to any smooth path. The sinusoidal path can serve as the basis for smooth paths.
	
	Next, we contrast the performance of the proposed bounded input path following guidance law with existing path following methods presented in \cite{ratnoo2011adaptive}, \cite{kothari2014uav}, \cite{kumar2023robust}, and \cite{patrikar2019nested}, denoted by $\mathcal{C}_1$, $\mathcal{C}_2$, $\mathcal{C}_3$, and $\mathcal{C}_4$, respectively.  
	
	In \cite{ratnoo2011adaptive}, the guidance law was developed based on an adaptive optimal technique to manipulate the heading angle and to make the UAV converge on the desired path. The controller of $\mathcal{C}_1$ is given by 
	\begin{align*}
		a_m=-\left[\left(\sqrt{\left|\frac{d_b}{d_b-d}\right|}\right)d+\left(\sqrt{2\sqrt{\left|\frac{d_b}{d_b-d}\right|}+1}\right)\dot{d}\right],
	\end{align*}
	where $d_b$ is maximum desired distance error band and $d$ is cross-track error. On the other hand, the authors of $\mathcal{C}_2$ developed pure pursuit and line-of-sight-based guidance laws to steer the UAV on the desired path. The guidance law of Method 2 is $a_m=a_1(\psi_d-\psi)+a_2 d$, where $a_1$ and $a_2$ are positive gains, $d$ is cross-track error, and $\psi_d$ denotes the desired heading angle. Note that $\mathcal{C}_2$ is developed only for special paths like straight-line and circular orbit paths. So, we restrict ourselves to these generic paths. The author of $\mathcal{C}_3$ used SMC to drive $d$ and $\dot{d}$ to zero. The sliding mode controller of $\mathcal{C}_3$ is 
	\begin{align*}
		a_m&=\frac{-1}{\cos(\psi-\psi_d)}\left[\beta\frac{p}{q}\dot{d}^{2-\frac{p}{q}}+\eta \sign(s)\right],s=d+\frac{1}{\beta}\dot{d}^{\frac{p}{q}},
	\end{align*}
	where $\beta>0$, $p$ and $q$ are odd positive integers such that $p>q$, and $1< \frac{p}{q}<2$, and $\eta$ is a positive constant. The author of $\mathcal{C}_4$ used nested saturation theory to design a guidance law for path following under bounded input, which is given by
	\begin{align*}
		a_m&=-\sigma_{H_1}\left(\frac{s_1\dot{d}+\sigma_{H_2}(s_2s_1\dot{d}+s_2d)}{\cos(\psi-\psi_d)}\right),\\
		\sigma_{H_i}(r)&=
		\begin{cases}
			H_i,  & r\ge H_i\\
			-H_i, & r\le -H_i \\
			r, & \text{otherwise}
		\end{cases}.
	\end{align*}
	The terms $H_1$, $H_2$, $s_1$, and $s_2$ are chosen to satisfy the following conditions:
	\begin{align*}
		\frac{H_2}{s_1} &< v_m, \\
		\frac{H_2}{\sigma_{H_1}\left(\frac{H_2+\sigma_{H_2}(s_1H_2)}{\sqrt{1-{H_2^2}/{\left(v_m^2s_1^2\right)}}}\right) \sqrt{1-{H_2^2}/{\left(v_m^2s_1^2\right)}}} &< \frac{s_1+s_1s_2}{s_2},
	\end{align*}
	where $H_1$ and $H_2$ are bounds on $\sigma_{H_1}(.)$ and $\sigma_{H_2}(.)$, $s_1$ and $s_2$ are gains, and $v_m$ is the UAV's speed, respectively.
	
	We first contrast the performance for straight-line path following. The UAV is initially located at $(10~\text{m}, -100~\text{m})$ with a heading angle of $90^\circ$. The UAV velocity is $10~\text{m/s}$, and it is required to follow a straight-line path whose start and end points, $P_1$ and $P_2$, are $(0~\text{m}, 0~\text{m})$ and $(300~\text{m}, 300~\text{m})$, respectively. For the simulation purpose, the parameters of the proposed method were selected as follows: maximum commanded lateral acceleration $M_2 = 10~\text{m/s}^2$, and gains $k_1 = 0.2$, $k_2 = 0.2$. For $\mathcal{C}_1$, the maximum cross-track error band is $d_b = 5$. The gains of $\mathcal{C}_2$ are $a_1 = 30$ and $a_2 = 1$. Sliding mode parameters of $\mathcal{C}_3$ are $\beta = 5$, $\eta = 15$, $p = 15$, and $q = 13$. The parameters for $\mathcal{C}_4$ are $H_2 = 9$, $H_1 = 10$, $s_1 = 1.5$, and $s_2 = 4$. Simulation results for these guidance laws are shown in \Cref{fig:6}. The guidance law of the proposed method steers the UAV to the desired straight-line path while maintaining bounded control input. In contrast, $\mathcal{C}_1$, $\mathcal{C}_2$, $\mathcal{C}_3$, and $\mathcal{C}_4$ converge to the desired path, but the control inputs do not remain within the desired bounds. The root mean square (RMS) control effort of the proposed method, $\mathcal{C}_1$, $\mathcal{C}_2$, $\mathcal{C}_3$, and $\mathcal{C}_4$, is $0.2616$, $2.8346$, $3.1879$, $1.0511$, and $0.5103$, respectively. The RMS control effort of the proposed method is significantly lower than other methods, indicating that it is most suitable for long-duration maneuvers as it consumes less power. The guidance law of $\mathcal{C}_4$ also converges the UAV to the desired path while maintaining bounded input, but its RMS control effort is nearly double times high as that of the proposed method. Note that $\mathcal{C}_1$, $\mathcal{C}_2$, and $\mathcal{C}_3$ exhibit high lateral accelerations, which are not well-suited for real-world applications and may damage UAV systems.
	
	Finally, we compare the performance when the UAV is subjected to a circular orbit path. The initial position of the UAV is $(100~\text{m}, 50~\text{m})$ with a heading angle of $45^\circ$. It is moving with a constant speed of $10~\text{m/s}$ and has to follow a circular orbital path of radius $50~\text{m}$ with center at $(0~\text{m}, 0~\text{m})$. For this case, the parameters for method 2 are $a_1 = 30$ and $a_2 = 0.1$. Sliding mode parameters of method 3 are $\beta = 3$, $\eta = 10$, $p = 5$, and $q = 3$. The parameters for method 4 are $H_2 = 9$, $H_1 = 10$, $s_1 = 1.5$, and $s_2 = 4$. The other parameters are kept the same as for the straight-line case. The comparative simulation results are illustrated in \Cref{fig:7}. The controller of the proposed strategy smoothly converges to the desired circular path with minimal cross-track error as compared to $\mathcal{C}_1$, $\mathcal{C}_2$, $\mathcal{C}_3$, and $\mathcal{C}_4$, as shown in \Cref{fig:7c}. The RMS control effort of the proposed method, $\mathcal{C}_1$, $\mathcal{C}_2$, $\mathcal{C}_3$, and $\mathcal{C}_4$ are 2.4163, 3.5215, 4.8541, 3.3005, and 2.4893, respectively. The rate of change of error for all methods converges to zero except $\mathcal{C}_2$.  
	
	Note that the proposed guidance strategy demonstrates superior performance in terms of path-following accuracy, input boundedness, and total control effort, making it a more reliable and robust solution for real-time UAV applications.
	\begin{figure*}[!ht]
		\centering
		\begin{subfigure}{0.33\linewidth}
			\centering
			\includegraphics[width=\linewidth]{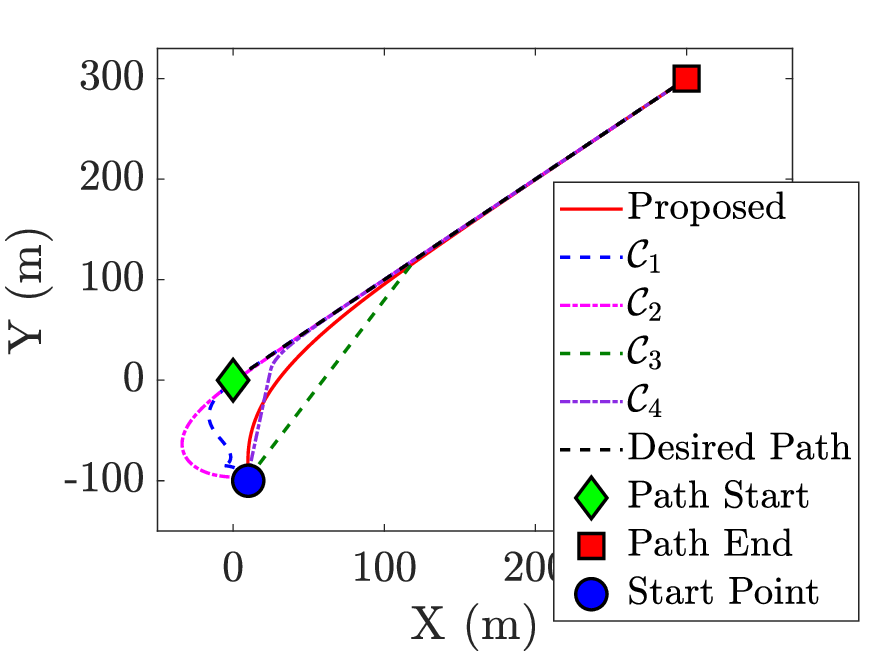}
			\caption{Trajectories.}
			\label{fig:6a}    
		\end{subfigure}%
		\begin{subfigure}{0.33\linewidth}
			\centering
			\includegraphics[width=\linewidth]{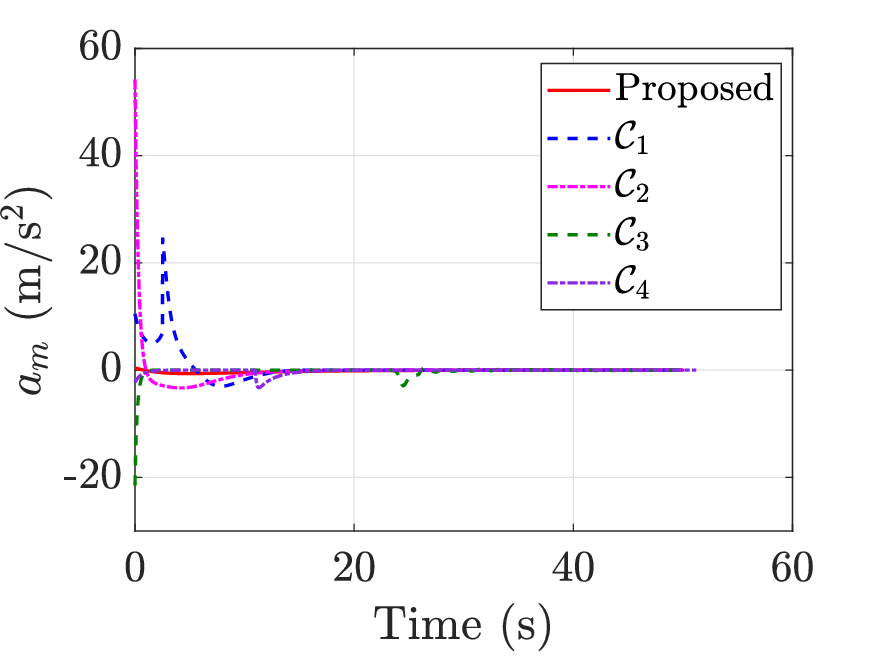}
			\caption{Lateral accelerations.}
			\label{fig:6b}    
		\end{subfigure}%
		\begin{subfigure}{0.33\linewidth}
			\centering
			\includegraphics[width=\linewidth]{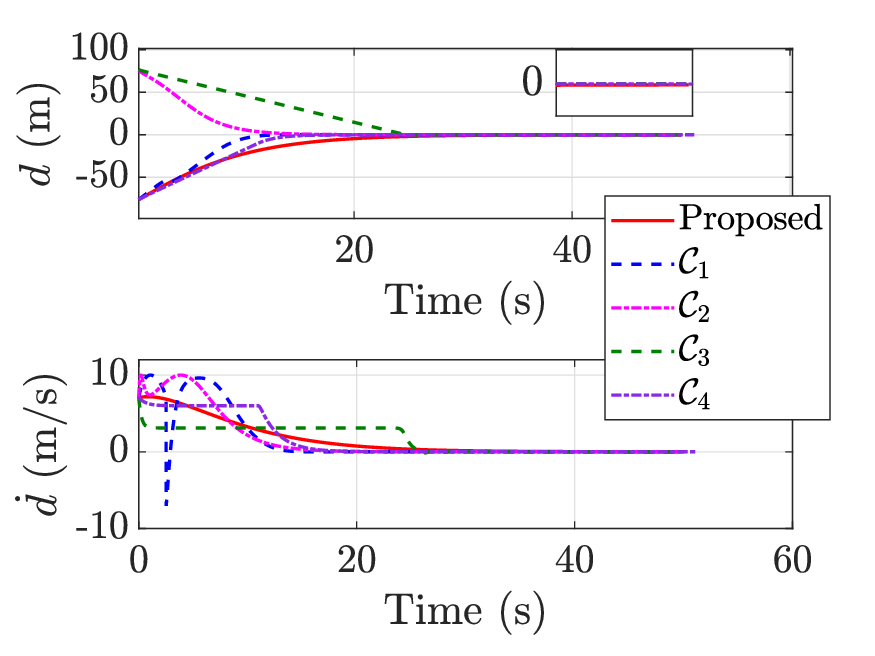}
			\caption{Error and its rate.}
			\label{fig:6c}    
		\end{subfigure}
		\caption{Performance comparison: straight-line path}
		\label{fig:6}
	\end{figure*}
	\begin{figure*}[!ht]
		\centering
		\begin{subfigure}{0.33\linewidth}
			\centering
			\includegraphics[width=\linewidth]{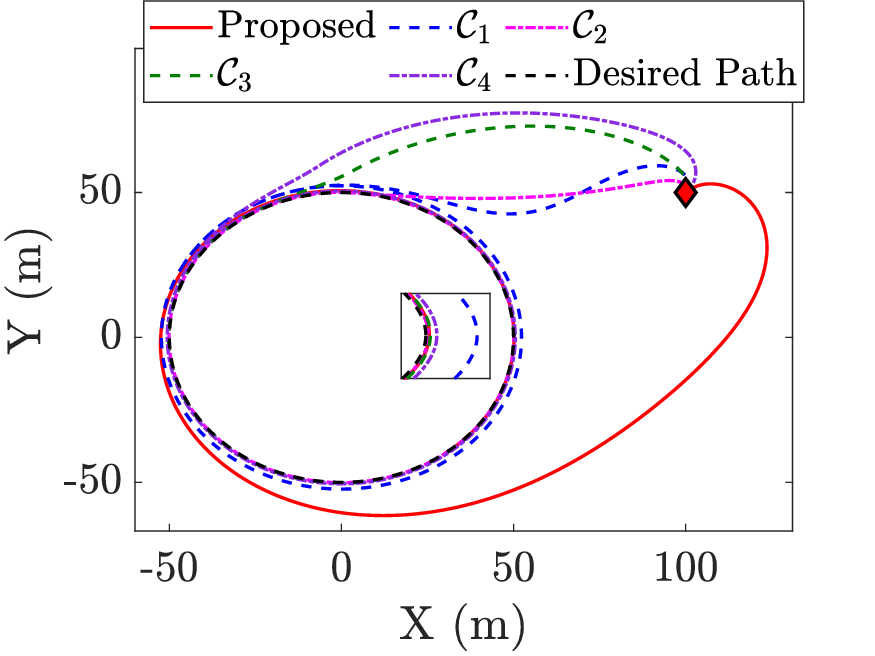}
			\caption{Trajectories.}
			\label{fig:7a}    
		\end{subfigure}%
		\begin{subfigure}{0.33\linewidth}
			\centering
			\includegraphics[width=\linewidth]{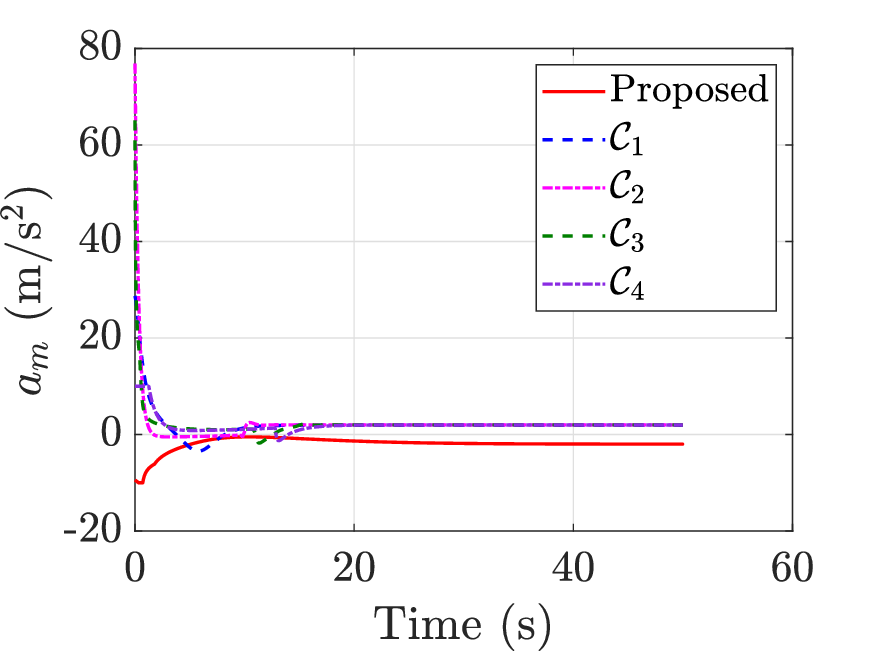}
			\caption{Lateral accelerations.}
			\label{fig:7b}    
		\end{subfigure}%
		\begin{subfigure}{0.33\linewidth}
			\centering
			\includegraphics[width=\linewidth]{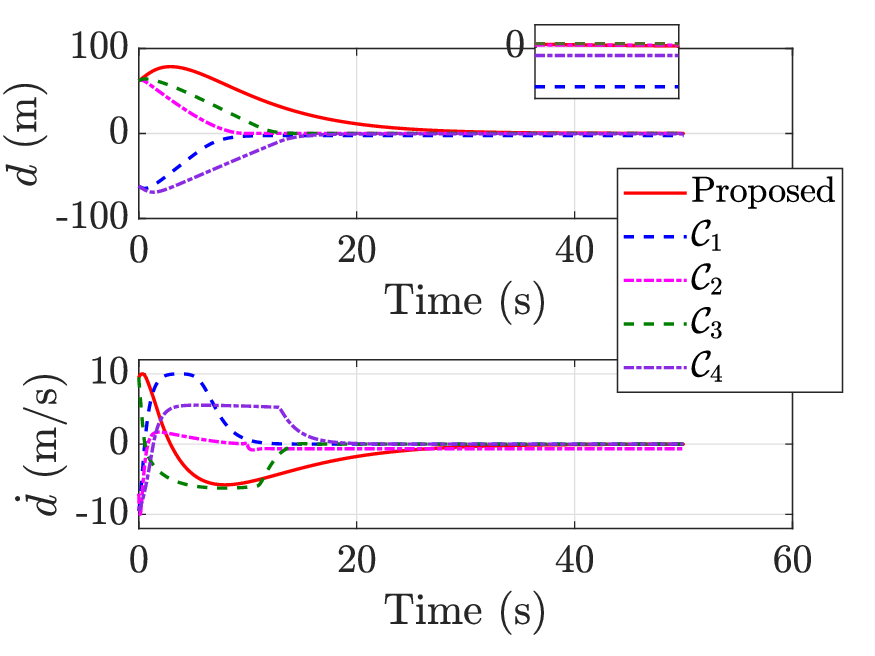}
			\caption{Error and its rate.}
			\label{fig:7c}    
		\end{subfigure}
		\caption{Performance comparison: circular path}
		\label{fig:7}
	\end{figure*}
	
	We now consider the 3D straight-line path from point $P_3=[0,0,0]^\top$ to $P_4=[200,200,200]$, as shown in \Cref{fig:3D_straightline_poblem}. The simulation parameters are chosen as follows: UAV speed of $15$ m/s, maximum acceleration bounds $a_{m\max} = M_{2,h}'=M_{2,v}'=10$ m/s$^2$, and gains chosen as $k_{1,h}=k_{2,h}=k_{1,v}=k_{2,v}=1$. To evaluate robustness against initial conditions, three distinct initial conditions are chosen as: $(x,y,z,\chi,\gamma) = (10, 20, 30, 30^\circ, 40^\circ)$, $(20, 40, -60, 40^\circ, 20^\circ)$, and $(50, 70, 80, 50^\circ, 20^\circ)$. Under the proposed guidance law, the performance of the UAV is demonstrated through \Cref{fig:11}. As observed, the UAV converges to the desired 3D straight-line path in all cases, driving the horizontal and vertical states $d_h, d_v, \dot{d}_h, \dot{d}_v$ to zero (see \Cref{fig:11c}, \Cref{fig:11d}). The control inputs $a_h$ and $a_v$ remain strictly bounded within the prescribed bounds of $\pm 10$ m/s$^2$ for the entire duration $t \geq 0$ (see \Cref{fig:11b}). Similarly to the planar straight-line case, once the saturation functions $\sigma_{1,h}(\cdot)$ and $\sigma_{2,h}(\cdot)$ become unsaturated simultaneously (see \Cref{fig:11e}, \Cref{fig:11f}), the horizontal cross-track error $d_h$ and its time derivative $\dot{d}_h$ exhibit an exponential convergence to zero (see \Cref{fig:11c}, \Cref{fig:11d}). The same behavior can also be observed in the vertical plane.   
	
	\begin{figure*}[htbp]
		\centering
		\begin{subfigure}{0.33\linewidth}
			\centering
			\includegraphics[width=\linewidth]{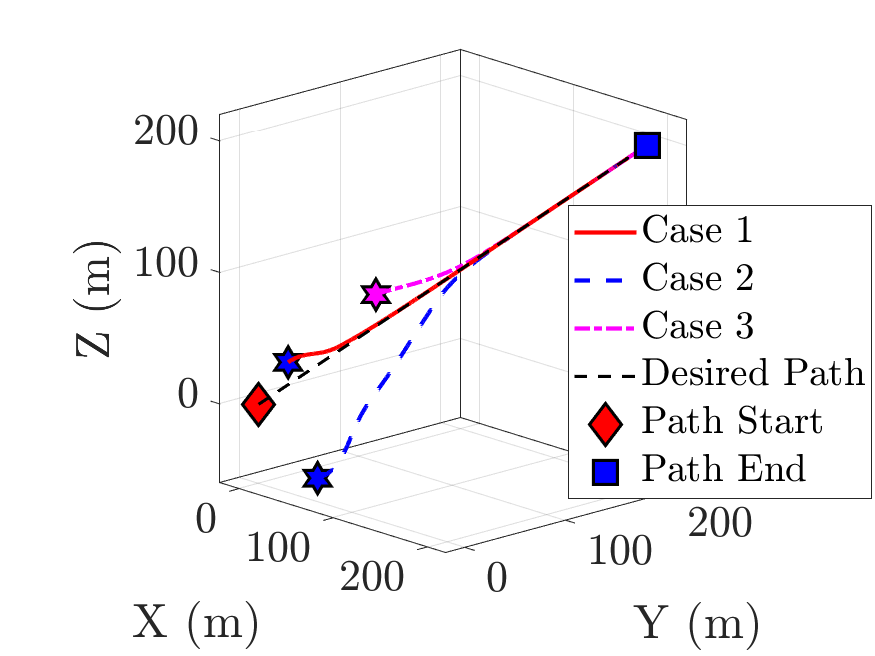}
			\caption{Trajectories.}
			\label{fig:11a}    
		\end{subfigure}%
		\begin{subfigure}{0.33\linewidth}
			\centering
			\includegraphics[width=\linewidth]{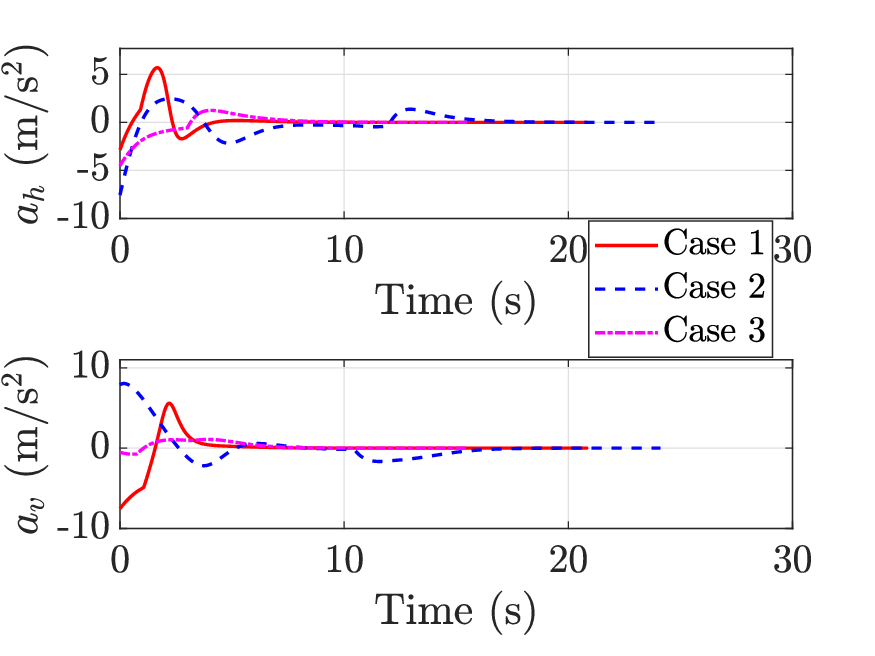}
			\caption{Lateral Accelerations. }
			\label{fig:11b}    
		\end{subfigure}%
		\begin{subfigure}{0.33\linewidth}
			\centering
			\includegraphics[width=\linewidth]{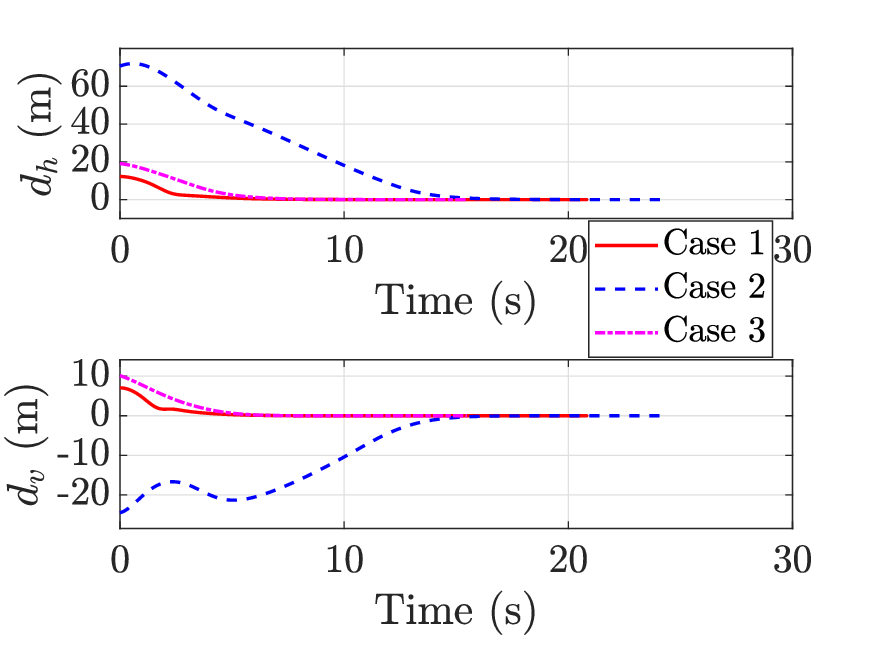}
			\caption{Distance Error.}
			\label{fig:11c}
		\end{subfigure}
		\begin{subfigure}{0.33\linewidth}
			\centering
			\includegraphics[width=\linewidth]{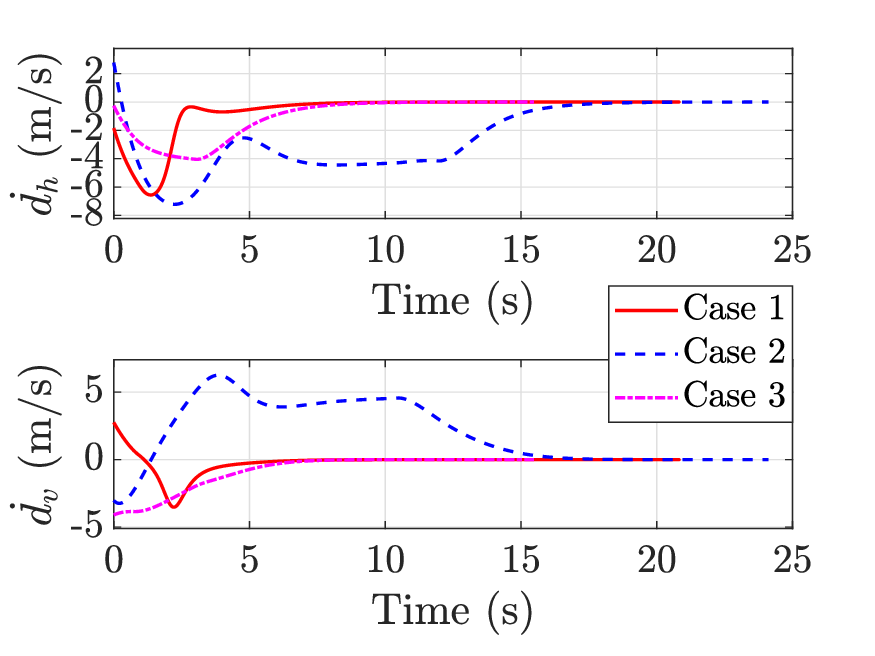}
			\caption{Velocity Error.}
			\label{fig:11d}    
		\end{subfigure}%
		\begin{subfigure}{0.33\linewidth}
			\centering
			\includegraphics[width=\linewidth]{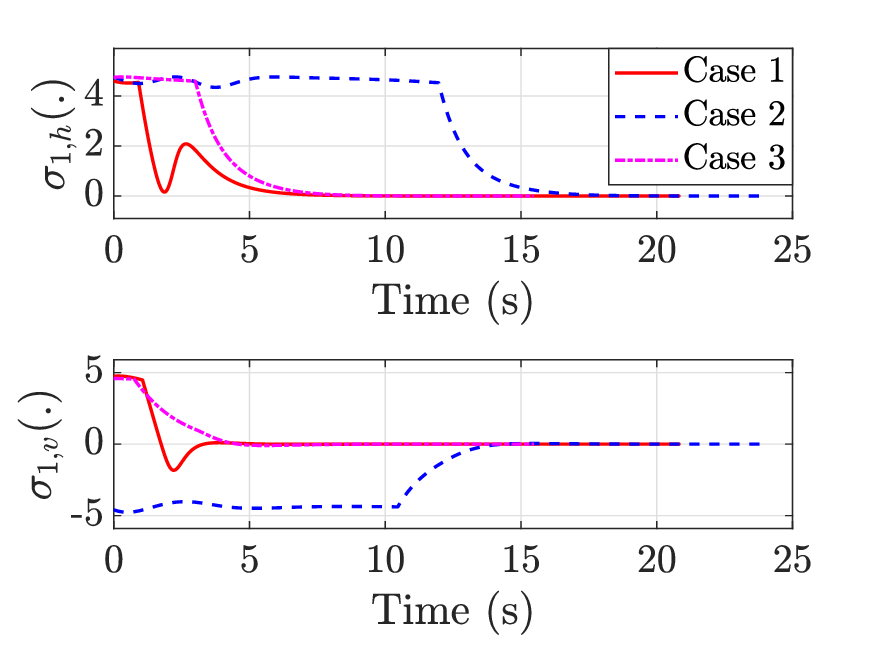}
			\caption{Saturation Functions.}
			\label{fig:11e}    
		\end{subfigure}%
		\begin{subfigure}{0.33\linewidth}
			\centering
			\includegraphics[width=\linewidth]{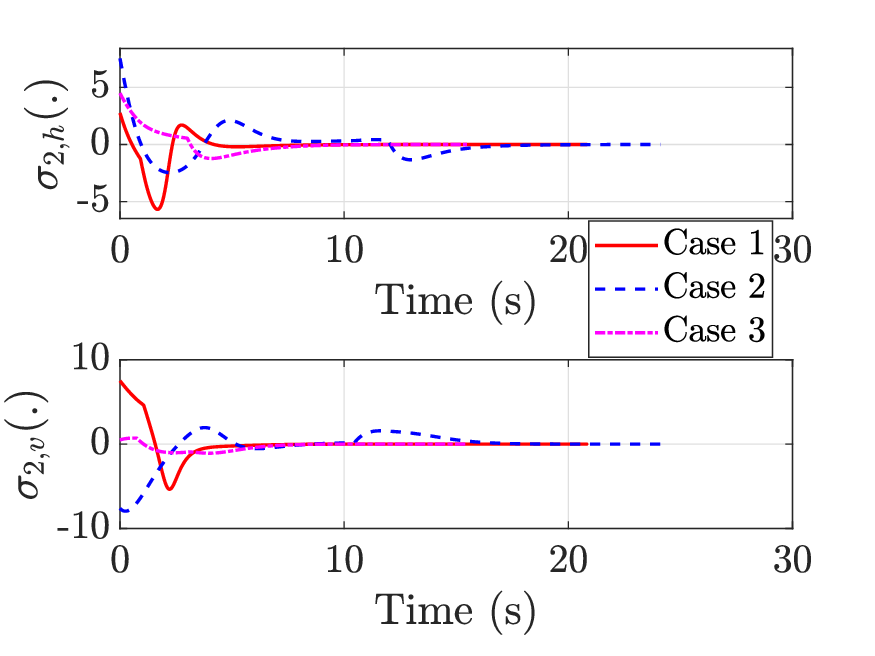}
			\caption{Saturation Functions.}
			\label{fig:11f}    
		\end{subfigure}
		\caption{UAV performance: 3D straight-line path}
		\label{fig:11}
	\end{figure*}
	
	Next, we consider the 3D circular path of radius 100 m and with its centre located at $(20\text{m},30\text{m},40\text{m})$. The roll angle and azimuthal angle of the circle are $20^\circ$ and $30^\circ$, respectively. The simulation parameters are chosen as follow: UAV speed of $15$ m/s, maximum acceleration bounds $a_{m\max} = M_{2,h}'=M_{2,v}'=15$ m/s$^2$, and gains chosen as $k_{1,h}=k_{2,h}=k_{1,v}=k_{2,v}=5$. To evaluate the performance, three distinct initial conditions are chosen: $(x,y,z,\chi,\gamma) = (70, 70, 90, 30^\circ, 40^\circ)$, $(-80, -80, 100, 60^\circ, 30^\circ)$, and $(150, 80, 30, 40^\circ, 40^\circ)$. Under the proposed guidance law, the performance of the UAV is demonstrated through \Cref{fig:12}. As observed, the UAV converges to the desired circular path in all cases, driving the horizontal and vertical states $d_h, d_v, \dot{d}_h, \dot{d}_v$ to zero (see \Cref{fig:11c}, \Cref{fig:11d}). The control inputs $a_h$ and $a_v$ remain strictly bounded within the prescribed limits of $\pm 10$ m/s$^2$ for the entire duration $t \geq 0$ (see \Cref{fig:12b}). A similar response is observed in the circular path following case as in the straight-line scenario.
	\begin{figure*}[htbp]
		\centering
		\begin{subfigure}{0.33\linewidth}
			\centering
			\includegraphics[width=\linewidth]{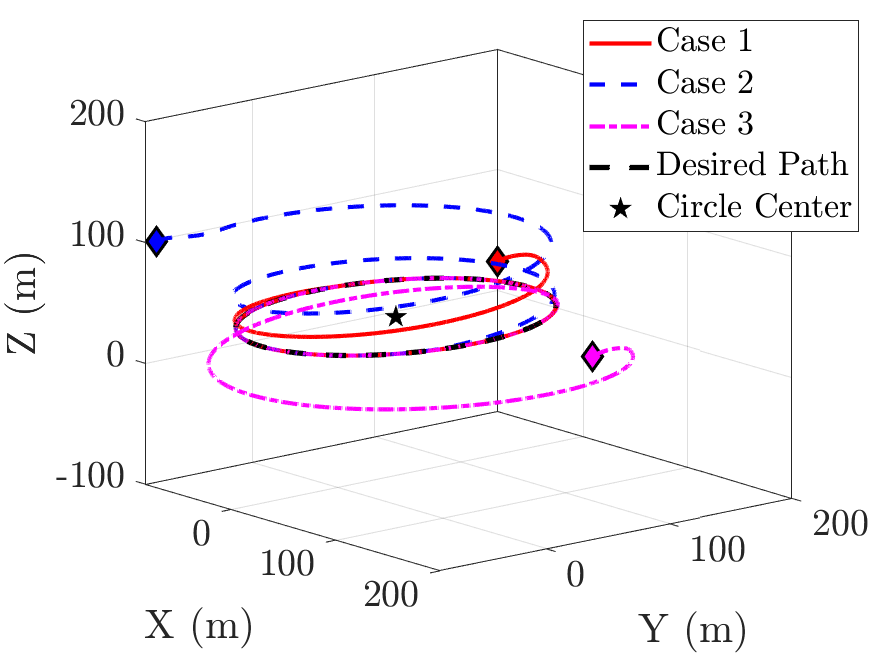}
			\caption{Trajectories.}
			\label{fig:12a}    
		\end{subfigure}%
		\begin{subfigure}{0.33\linewidth}
			\centering
			\includegraphics[width=\linewidth]{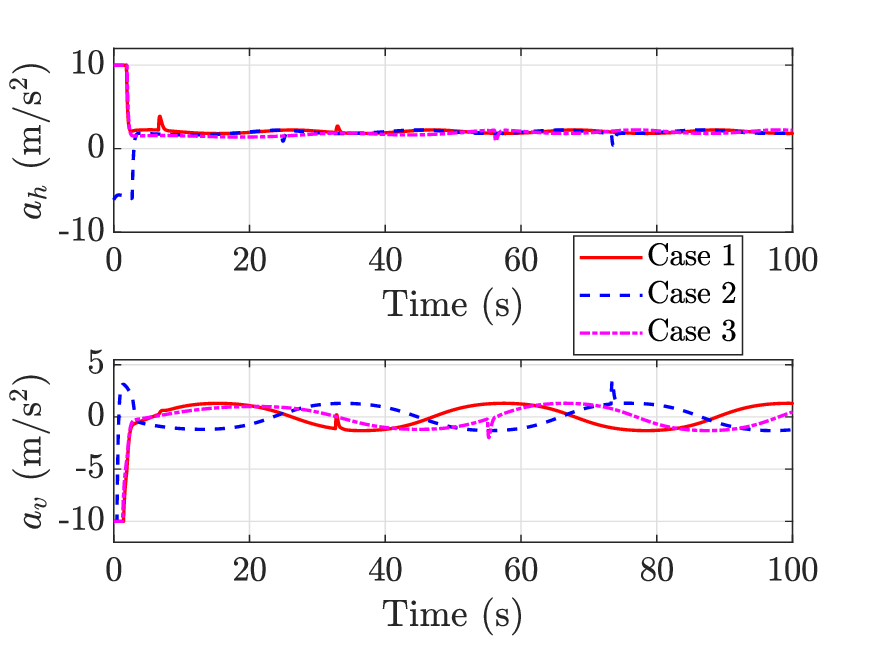}
			\caption{Lateral accelerations.}
			\label{fig:12b}    
		\end{subfigure}%
		\begin{subfigure}{0.33\linewidth}
			\centering
			\includegraphics[width=\linewidth]{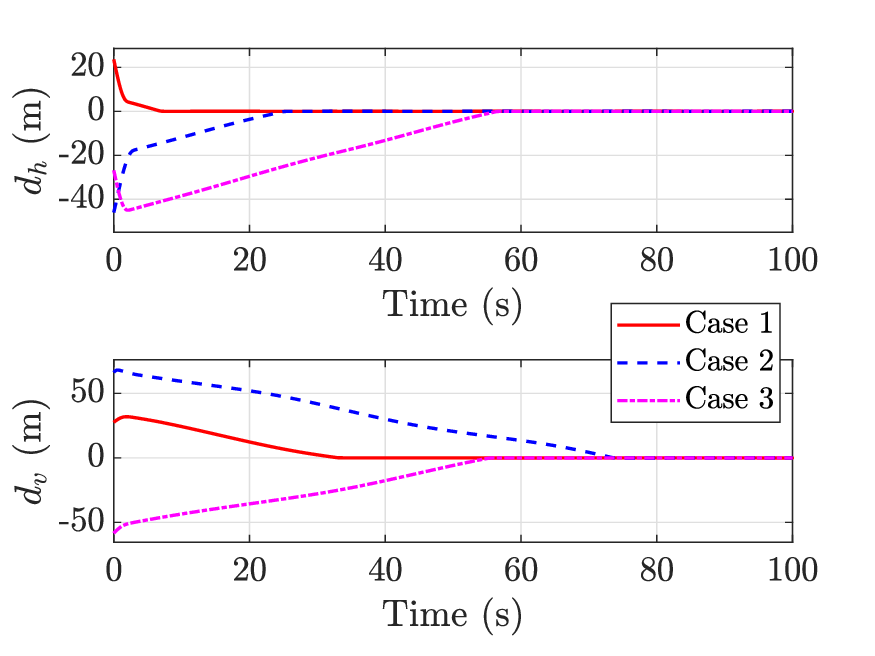}
			\caption{Distance Error.}
			\label{fig:12c}
		\end{subfigure}
		\begin{subfigure}{0.33\linewidth}
			\centering
			\includegraphics[width=\linewidth]{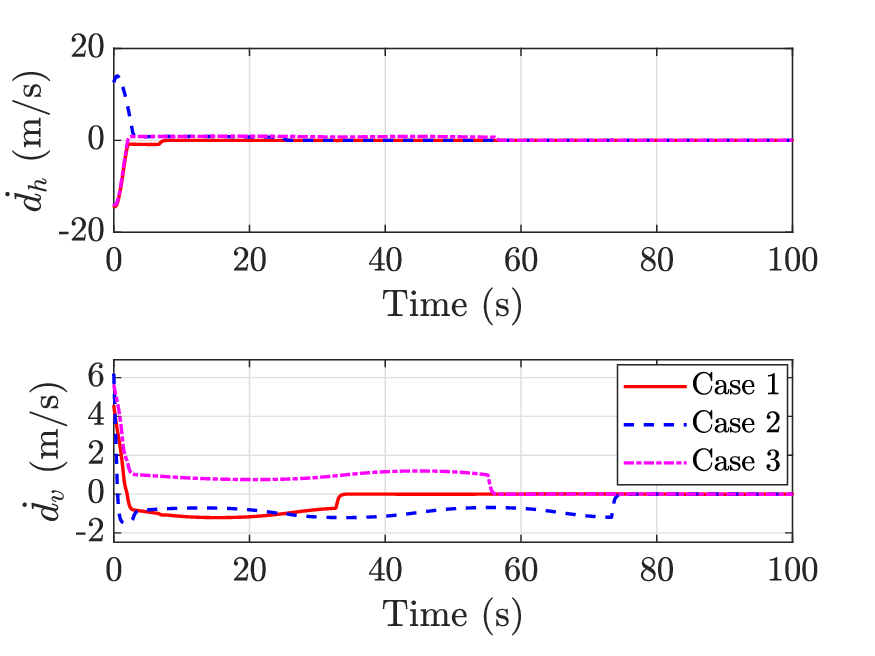}
			\caption{Velocity Error.}
			\label{fig:12d}    
		\end{subfigure}%
		\begin{subfigure}{0.33\linewidth}
			\centering
			\includegraphics[width=\linewidth]{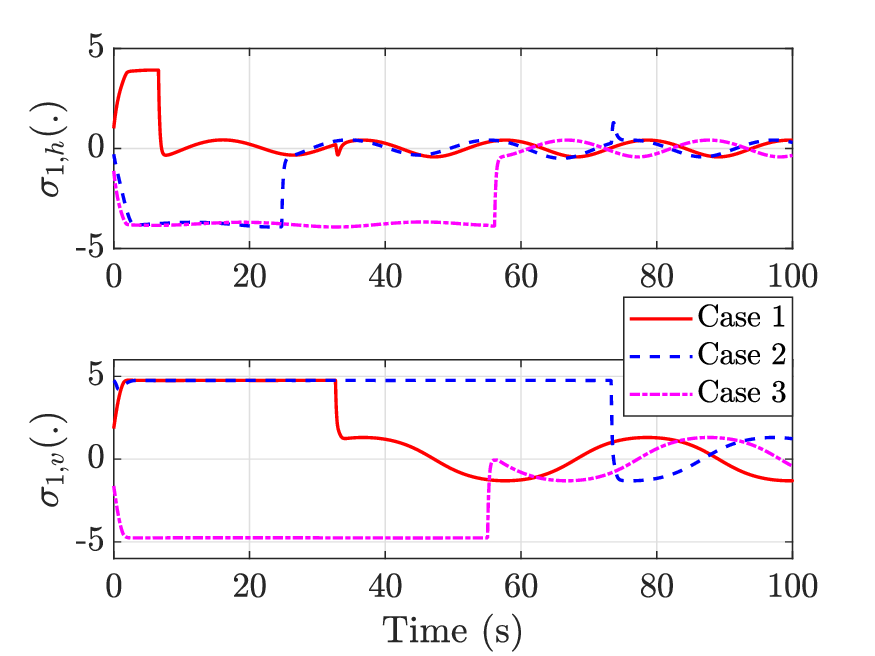}
			\caption{Saturation Function.}
			\label{fig:12e}    
		\end{subfigure}%
		\begin{subfigure}{0.33\linewidth}
			\centering
			\includegraphics[width=\linewidth]{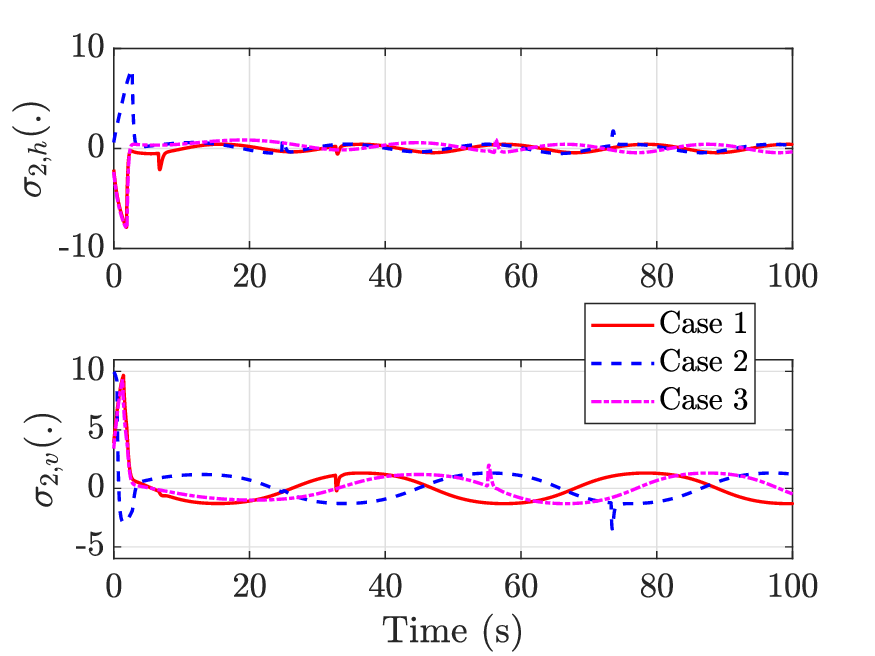}
			\caption{Saturation Function.}
			\label{fig:12f}    
		\end{subfigure}
		\caption{UAV performance: 3D circular path.} 
		\label{fig:12}
	\end{figure*}
	
	We now demonstrate the robustness of the proposed guidance law, when the UAV is following a 3D path. For this, we consider a scenario in which a UAV has to follow a 3D circular path in the presence of a wind gust. The simulation parameters are chosen the same as the 3D circular path following. The initial conditions are chosen as $x=y=z=100$ m, and $\chi=\gamma=45^\circ$. The wind gust is characterized by a velocity of $5\sqrt{3}$ m/s with an azimuthal angle of $45^\circ$ and an elevation angle of $ 35.26^\circ$, and it is active for 10 seconds, as shown in \Cref{fig:13e}. The performance of the UAV under the proposed guidance law during the wind disturbance is shown in \Cref{fig:13}. UAV successfully converges to the desired 3D circular path in the presence of  wind gust. This, in turn, bolsters to the claim that the proposed guidance strategy is robust against such external disturbances.
	\begin{figure*}[!ht]
		\centering
		\begin{subfigure}{0.5\linewidth}
			\centering
			\includegraphics[width=\linewidth]{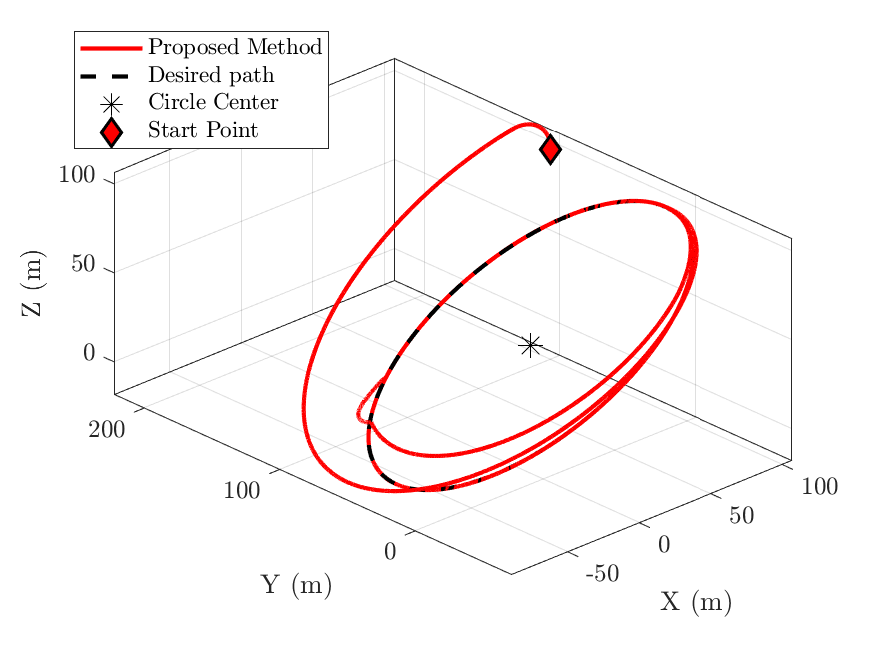}
			\caption{Trajectories.}
			\label{fig:13a}    
		\end{subfigure}%
		\begin{subfigure}{0.5\linewidth}
			\centering
			\includegraphics[width=\linewidth]{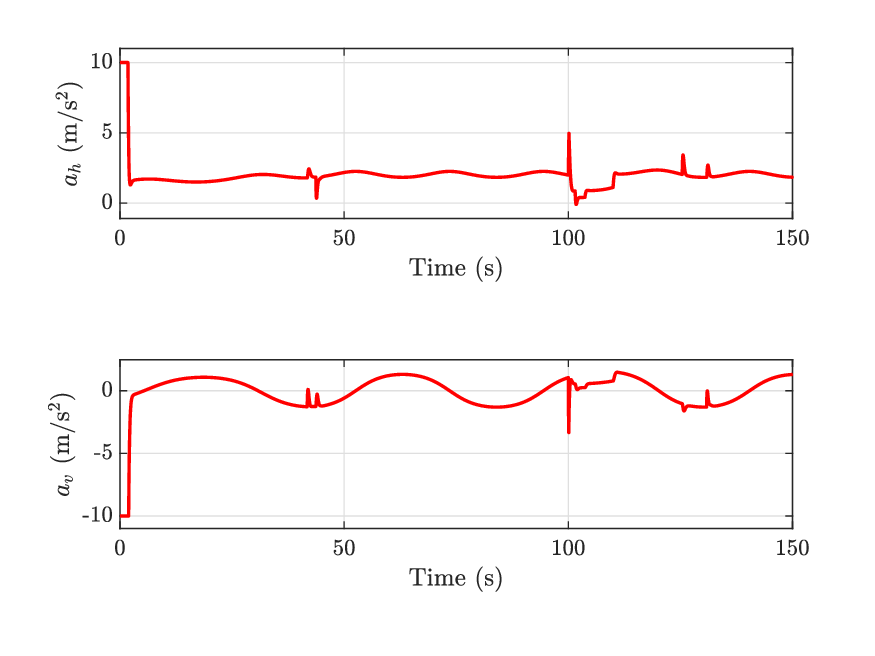}
			\caption{Lateral accelerations.}
			\label{fig:13b}    
		\end{subfigure}
		\begin{subfigure}{0.33\linewidth}
			\centering
			\includegraphics[width=\linewidth]{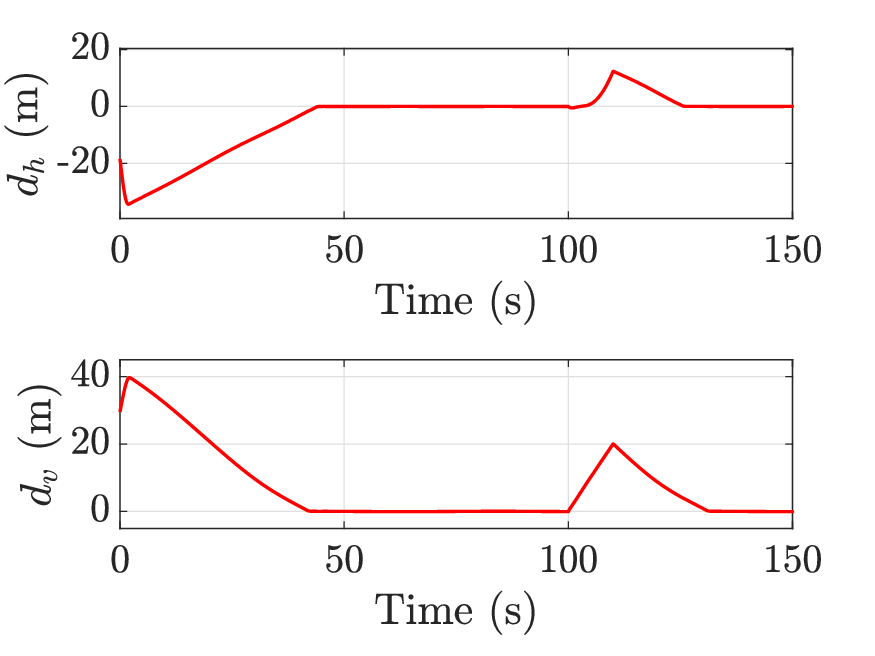}
			\caption{Distance Error.}
			\label{fig:13c}    
		\end{subfigure}%
		\begin{subfigure}{0.33\linewidth}
			\centering
			\includegraphics[width=\linewidth]{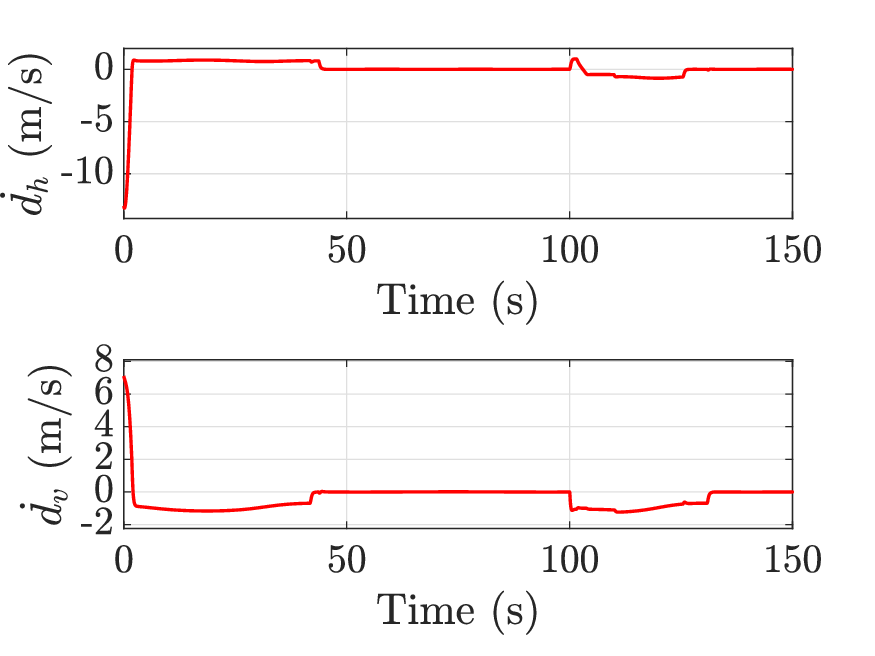}
			\caption{Velocity Error.}
			\label{fig:13d}    
		\end{subfigure}%
		\begin{subfigure}{0.33\linewidth}
			\centering
			\includegraphics[width=\linewidth]{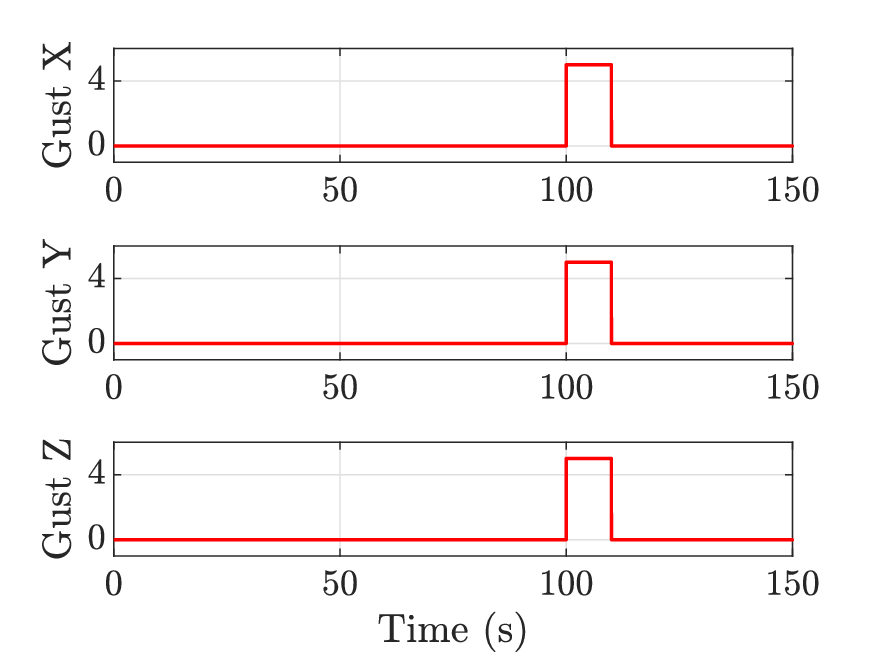}
			\caption{Gust.}
			\label{fig:13e}    
		\end{subfigure}
		\caption{UAV's performance: in the presence of gust.}
		\label{fig:13}
	\end{figure*}
	\section{Conclusions}\label{sec:conclusion}
	In this paper, we proposed a bounded input path following guidance law for UAV, based on nested saturation theory. The proposed guidance law ensures that the control inputs remain strictly within predefined bounds for all initial conditions, making it suitable for real-world applications with actuator constraints. We demonstrate the performance of the proposed controller across various paths, including straight lines, circular orbits, and sinusoidal curves, as well as in the presence of wind, which confirms the robustness of the proposed strategy. The results demonstrate that the proposed guidance law steers the UAV to the desired path with smooth convergence and bounded control input. We presented a comparative analysis with existing path-following strategies, which further highlighted the advantages of the proposed method, particularly in terms of input feasibility, tracking performance, and energy efficiency. The reduced control effort enhances the battery life, which is very critical for long-endurance missions. The proposed controller is simple and easy to implement on an inexpensive on-board sensor, making it suitable for real-world applications such as aerial surveillance, environmental monitoring, and maneuvering. The future direction could be  the design of guidance law accounting for various kinds of obstacles under input constraints.
	\bibliography{references_1.bib} 
\end{document}